\theoremstyle{plain}
\newtheorem{thm}{Theorem}
\newtheorem{proposition}[thm]{Proposition}
\newtheorem{cor}[thm]{Corollary}
\theoremstyle{definition}
\DeclareMathOperator{\tr}{Tr}
\newcommand{\bb}{\mathbb}
\newcommand{\mc}{\mathcal}
\newcommand{\mf}{\mathfrak}
\newcommand{\ct}{\ensuremath{^\dagger}}
\newcommand{\ket}[1]{\ensuremath{|#1\rangle}}
\newcommand{\bra}[1]{\ensuremath{\langle#1|}}
\newcommand{\dket}[1]{\ensuremath{|#1\rangle\!\rangle}}
\newcommand{\dbra}[1]{\ensuremath{\langle\!\langle#1|}}
\newcommand{\drangle}{\ensuremath{\rangle\!\rangle}}
\def\ketbra#1{\def\tempa{#1}\futurelet\next\ketbra@i}
\def\ketbra@i{\ifx\next\bgroup\expandafter\ketbra@ii\else\expandafter\ketbra@end\fi}
\def\ketbra@ii#1{\ket{\tempa}\bra{#1}}
\def\ketbra@end{\ket{\tempa}\bra{\tempa}}
\def\dketbra#1{\def\tempa{#1}\futurelet\next\dketbra@i}
\def\dketbra@i{\ifx\next\bgroup\expandafter\dketbra@ii\else\expandafter\dketbra@end\fi}
\def\dketbra@ii#1{\dket{\tempa}\dbra{#1}}
\def\dketbra@end{\dket{\tempa}\dbra{\tempa}}
\def\dbraket#1{\def\tempa{#1}\futurelet\next\dbraket@i}
\def\dbraket@i{\ifx\next\bgroup\expandafter\dbraket@ii\else\expandafter\dbraket@end\fi}
\def\dbraket@ii#1{\dbra{\tempa}#1 \drangle}
\def\dbraket@end{\dbra{\tempa}\tempa\drangle}
\begin{document}
	
	\title{Efficiently computing logical noise in quantum error-correcting codes}
	\author{Stefanie J. Beale}
	\email{sbeale@uwaterloo.ca}
	\affiliation{ Institute for Quantum Computing and Department of Physics and
		Astronomy, University of Waterloo, Waterloo, Ontario, N2L 3G1, Canada}
	\affiliation{Quantum Benchmark Inc., 51 Breithaupt Street, Suite 100 Kitchener, Ontario N2H 5G5, Canada}
	
	\author{Joel J. Wallman}
	\email{jwallman@uwaterloo.ca}
	\affiliation{Quantum Benchmark Inc., 51 Breithaupt Street, Suite 100 Kitchener, Ontario N2H 5G5, Canada}
	\affiliation{ Institute for Quantum Computing and Department of Applied Mathematics, University of Waterloo, Waterloo, Ontario, N2L 3G1, Canada}
	
	\begin{abstract}
		Quantum error correction protocols have been developed to offset the high sensitivity to noise inherent in quantum systems.
		However, much is still unknown about the behaviour of a quantum error-correcting code under general noise, including noisy measurements.
		This lack of knowledge is largely due to the computational cost of simulating quantum systems large enough to perform nontrivial encodings.
		In this paper, we develop general methods for incorporating noisy measurement operations into simulations of quantum error-correcting codes and show that measurement errors on readout qubits manifest as a renormalization on the effective logical noise.
		We also derive general methods for reducing the computational complexity of calculating the exact effective logical noise by many orders of magnitude.
		This reduction is achieved by determining when different recovery operations produce equivalent logical noise.
		These methods could also be used to better approximate soft decoding schemes for concatenated codes or to reduce the size of a lookup table to speed up the error correction step in implementations of quantum error-correcting codes.
		We give examples of such reductions for the three-qubit, five-qubit, Steane, concatenated, and toric codes.
	\end{abstract}

	\pacs{03.67.Pp}
	
	\maketitle
	
	\section{Introduction}
	
	Quantum error-correcting codes~\cite{Gottesman2010, Knill1997} (QECCs) and other methods for fault-tolerant quantum computation will likely be required to use quantum computers to solve otherwise intractable problems.
	However, determining the performance of QECCs is difficult for precisely the same reason that we want to develop them, namely, that simulating large-scale quantum systems is computationally expensive.
	The difficulty of simulating the performance of QECCs is compounded by the fact that the effective noise process depends on the specific syndromes that are observed, where the number of possible syndromes typically grows exponentially with the number of physical qubits and rounds of computation.
	Consequently, little is known about the behaviour of QECCs under generic noise processes.
	
	In recent years, there has been a resurgence of interest in studying the behaviour of general and specific noise in QECCs \cite{Gutierrez2016, Greenbaum2017, Darmawan2017, Chamberland2017, Bravyi2017, Beale2018, Huang2019, Cai2019, Iverson2019}.
	There has also been interest in studying symmetries in QECCs for several purposes, including noise tailoring \cite{Cai2019}, code construction via classical cyclic codes \cite{LaGuardia2017}, applying logical operations \cite{Grassl2013, Chao2018}, and reducing the complexity of simulations \cite{Huang2019}.
	In this paper, we extend the previous formalism used to study quantum memories~\cite{Rahn2002} to include noisy measurements.
	We then derive general conditions under which two syndromes will result in equivalent logical noise in a QECC.
	These degeneracies can reduce the computational cost of studying QECCs by orders of magnitude, and once degeneracies are found for a given QECC, they can be re-used for any future simulation of that code under any applicable noise model without any additional computation.
	Moreover, part of the motivation for simulating QECCs is to determine good choices of recovery maps for each syndrome.
	By establishing general and simple conditions under which recovery maps associated with different syndromes are degenerate, we can reduce the number of syndromes for which good choices need to be cached or have complex calculations performed.
	We anticipate that this will enable faster implementations of decoders, especially in memory-constrained environments (such as cryogenic control computers or decoders built into field-programmable gate arrays), so errors can be corrected before they cascade.
	
	In \Cref{sec:qec} of this paper, we review QECCs in a general setting and extend the formalism of Reference~\cite{Rahn2002} to include noisy quantum measurements.
	In \Cref{sec:equivalentConditionalMaps}, we derive general results showing when the effective noise conditioned upon two different measurement outcomes is equivalent.
	In \Cref{sec:stabSym}, we demonstrate that noisy readout measurements in stabilizer codes leave the effective logical noise invariant up to a renormalization and show how our results can be applied to reduce the simulation cost by applying our results to several small stabilizer codes, the general toric code, and concatenated codes.
	Even at the first level of concatenation, our results can reduce the exact simulation cost of a soft decoder (see \cite{Poulin2006} for a description of soft decoding) by a factor of $64^7/34 992 \approx 10^8$ for depolarizing noise in the Steane code.
	Similar results were obtained for some of these codes under a more restricted noise model (namely, noise models with a single Kraus operator) in Reference~\cite{Huang2019}, however, the results here are broader (they identify more degenerate syndrome maps), more general (they apply to general codes and noise maps, including noisy readout measurements), and allow symmetry operations to be verified with ease.
	
	\section{Quantum error-correcting codes}\label{sec:qec}
	
	In \Cref{sec:prelim}, we review quantum measurements, emphasizing how they can be regarded as sets of linear maps up to a benign renormalization factor.
	In \Cref{sec:qecsteps}, we describe a quantum error correction step in terms of the normalized measurement maps set out in \Cref{sec:prelim}.
	
	We use the following general notation throughout this paper.
	For any Hilbert spaces $\bb{H}$ and $\bb{K}$, let $\bb{B}(\bb{H}, \bb{K})$ denote the space of bounded linear maps from $\bb{H}$ to $\bb{K}$, $\textrm{Pos}(\bb{H})\subset\bb{B}(\bb{H},\bb{H})$ be the set of bounded and positive semi-definite operators from $\bb{H}$ to itself, and let $\bb{U}(\bb{H}, \bb{K}) \subset \bb{B}(\bb{H}, \bb{K}) / \bb{U}(1)$ denote the set of isometries from $\bb{H}$ to $\bb{K}$, where we remove global phases from the set of isometries because they are unobservable in quantum mechanics.
	When $\bb{H}=\bb{K}$, we use the shorthands $\bb{B}(\bb{H}) = \bb{B}(\bb{H}, \bb{H})$ and $\bb{U}(\bb{H}) = \bb{U}(\bb{H}, \bb{H})$.
	For clarity, we will distinguish between operators, that is, elements of $\bb{B}(\bb{H}, \bb{K})$, and superoperators, that is, elements of $\bb{B}(\bb{B}(\bb{H}), \bb{B}(\bb{K}))$, although both are formally linear maps.
	Specifically, we will exclusively use Roman font or Greek letters for operators and calligraphic font for superoperators.
	As a special case, any operator $A\in\bb{B}(\bb{H},\bb{K})$ defines a superoperator $\mc{A}\in\bb{B}(\bb{B}(\bb{H}), \bb{B}(\bb{K}))$ acting by conjugation, that is, $\mc{A}(M)=AMA\ct\: \forall M\in\bb{H}$.
	Note that the map $A \to \mc{A}$ is a bijection up to a global (and irrelevant) phase, so we abuse notation by interchanging $A$ and $\mc{A}$.
	Also note that the maps $A \leftrightarrow \mc{A}$ preserve multiplication, that is, $AB \leftrightarrow \mc{A} \mc{B}$.
	
	\subsection{Quantum measurements}\label{sec:prelim}
	
	A projection-valued measure (PVM) $\mf{A} \subset \mathrm{Pos}(\bb{H})$ is a set of orthogonal projectors that sum to the identity, that is, $A^2 = A$ for all $A \in \mf{A}$ and $\sum_{A \in \mf{A}} A = 1_\bb{H}$ where $1_\bb{H}$ denotes the identity element of $\bb{B}(\bb{H})$.
	Ideally, the state after applying a PVM $\mf{A}$ to a system in the state $\rho$ and observing outcome $A$ is
	\begin{align}\label{eq:postMeasurement}
	\tau_{\rho, A} = \frac{\mc{A}(\rho)}{\tr \mc{A}(\rho)},
	\end{align}
	where $\tr \mc{A}(\rho)$ is the probability of observing the outcome $A \in \mf{A}$ conditioned on the system being in the state $\rho$.
	However, experimental imperfections cause realistic measurements to deviate from an ideal projective measurement in a number of ways.
	First, noise may result in superoperators being applied to the system before and/or after the ideal measurement.
	Second, the measurement procedure may not be describable by a PVM, but rather by a positive-operator-valued measure (POVM), which is a subset $\mf{A} \subset \mathrm{Pos}(\bb{H})$ such that $\sum_{B \in \mf{A}} B = 1_\bb{H}$ (that is, we relax the assumption that the elements of a PVM are projectors).
	\Cref{eq:postMeasurement} can be directly generalized to POVMs using the decomposition $B=A\ct A\: \forall B\in\mf{A}\subset\textrm{Pos}(\bb{H})$ to define an operator $A$ (which need not be positive and is not unique).
	This reduces to the standard case because $B^\dagger B = B$ for any projector $B$.
	
	Crucially, for both PVMs and POVMs, the post-measurement state is not a linear function of the input state.
	However, the nonlinearity is of a benign form, namely, the nonlinearity from conditioning upon the output state that arises from Bayes' rule.
	With some abuse of notation (namely, interpreting 0/0 as 1), the expected post-measurement state is
	\begin{align}
	\sum_{A \in \mf{A}} \tau_{A, \rho} \tr \mc{A}(\rho) = \sum_{A \in \mf{A}} \mc{A}(\rho).
	\end{align}
	To accommodate additional noise processes, we can simply replace the implict linear maps $\mc{A}(\rho) = A \rho A^\dagger$ by general linear maps that can include pre- and post measurement control operations and noise maps.
	That is, we define a general measurement $\mf{M}$ to be a set of superoperators, that is, $\mf{M} \subset \bb{B(B(H), B(K))}$, although we will only consider the case $\bb{K} = \bb{H}$.
	To be physical, each $\mc{M} \in \mf{M}$ must map valid quantum states to valid quantum states (that is, be completely positive but generally not trace preserving).
	To be a complete measurement, some outcome must always occur if the input state is a density matrix and so
	\begin{align}\label{eq:fullMap}
	\sum_{\mc{M} \in \mf{M}} \mc{M}
	\end{align}
	must be a trace-preserving map.
	Note that any completely positive map can be represented in Kraus operator form as a channel in the form of \Cref{eq:fullMap}, and so noise processes can formally be regarded as a measurement where the outcome is not recorded.
	
	\subsection{Quantum error correction}\label{sec:qecsteps}
	
	An encoding of $\bb{H}$ into $\bb{K}$ is an isometry $U \in \bb{U}(\bb{H},\bb{K})$.
	For any encoding, we can choose a set of recovery maps $\mf{R} \subset \bb{U}(\bb{K})$ such that the set of projectors $\{\Pi_R=\mc{RUU\ct R\ct} : R \in \mf{R}\}$ associated with the cospaces of the code, form a projective measurement.
	That is, the recovery maps satisfy $U\ct Q\ct R U = \delta_{Q, R} 1_\bb{H}$ for all $Q, R \in \mf{R}$ and $\sum_{R \in \mf{R}} RU U\ct R\ct = 1_\bb{K}$.
	For example, the three-qubit repetition code encodes one logical qubit into three physical qubits via the isometry $U = \ketbra{000}{0} + \ketbra{111}{1}$.
	We can extend this isometry into a projective measurement on the encoded space by choosing, for example, $\mf{R} = \{III, XII, IXI, IIX\}$.
	The elements of $\mf{R}$ can be regarded as the most likely errors to occur.
	Thus, if a system is prepared in the encoded state $U\psi$ and a specific error $E \in \mf{R}$ occurs, the ideal PVM $\{\Pi_R=\mc{R U U\ct R}\ct : R \in \mf{R}\}$ will result in the outcome $E$, which can then be corrected by applying $\mc{E}\ct$.
	Therefore, we fold the recovery maps $\mc{R}\ct$ for each cospace into the measurement for convenience and define the ideal measurement to be $\mf{M} = \{\mc{M}\}$, where 
	$\mc{M}=\mc{R}\ct\Pi_R$ so that $\mf{M}=\{\mc{U U\ct R\ct} : R \in \mf{R}\}$. We follow this convention throughout this paper.
	A QECC is then a pair $(U, \mf{M})$, and can be used to protect a logical qubit against a noisy physical process $\mc{N}$ (typically a completely positive, trace-preserving map) as follows:
	\begin{enumerate}[ {(}1{)} ]
		\item Choose an input state $\bar{\rho}\in\rm{Pos}(\bb{H})$ with $\tr\bar{\rho}=1$.
		\item Prepare the state $\rho = \mc{U}(\bar{\rho})$.
		\item Send the state through a noisy channel $\mc{N} \in \bb{B}(\bb{B}(\bb{K}))$.
		\item Perform a measurement $\mf{M}$.
		\item Apply the decoding map $\mc{U}\ct$.
	\end{enumerate}
	While the above procedure includes applying the decoding map, in practice one would typically measure the expectation values of encoded operators or treat the output as an encoded input into a subsequent round of error correction, where subsequent operations can be conditioned upon the observed outcome of $\mf{M}$.
	
	When the implementation of the encoding $U$ is noisy, the encoded state can end up outside of the codespace, in which case, we lose information about the encoded state.
	Often, a gate will be applied before the correction step (step 3); in this case, $\mc{N}$ is replaced with a possibly noisy implementation of a gate or a fault-tolerant gadget.
	
	When the outcome $\mc{M} \in \mf{M}$ is observed, the above process results in the conditional map
	\begin{align}\label{eq:reducedMap}
	\bar{\mc{N}}_U(\mc{M}) = \mc{U}\ct\mc{M} \mc{N} \mc{U},
	\end{align}
	where we do not divide by the probability with which the outcome occurs.
	We define the average logical channel as the average over the conditional maps:
	\begin{align}
	\bar{\mc{N}}_U = \sum_{\mc{M} \in \mf{M}} \bar{\mc{N}}_U(\mc{M}).
	\end{align}
	The average logical channel is often used to benchmark the performance of a given QECC $(U, \mf{M})$ against a given noise model $\mc{N}$.
	
	For ideal measurements, the ideal recovery operator uniquely specifies the measurement outcome.
	Therefore, we define the effective map conditioned on the recovery map R to be the effective map conditioned on $\mc{M}_R = \mc{UU^\dagger R^\dagger}$, following our convention of folding the recovery map into the measurement.
	That is, we define
	\begin{align}\label{eq:idealReducedMap}
	\mc{\dot{N}}_U(R) =\bar{\mc{N}}_U(\mc{M}_R)= \mc{U}\ct \mc{R}\ct \mc{N} \mc{U},
	\end{align}
	as in Reference~\cite{Rahn2002}, where we have used the fact that $U\ct U U\ct = U\ct$ for the isometry $U$ and we use $\dot{\mc{N}}$ and Roman font rather than $\bar{\mc{N}}$ and caligraphic font to differentiate between conditioning on a measurement or a recovery operation.
	
	To model noisy readout measurements in QECCs, we assume that the effective measurement on the encoded space can be represented as a probabilistic sum over ideal projectors onto the cospaces of the code. That is, for a measurement outcome associated with recovery operation $R$, the noisy measurement channel $\tilde{\Pi}_R$ is given by
	
	\begin{align}
		\tilde{\Pi}_R(\rho)=\sum_{Q\in\bb{R}}\bar{C}_{Q,R}\Pi_Q(\rho),
	\end{align}
	where $\bar{C}_{Q,R}$ is an element of the logical $2^{n-k}\times 2^{n-k}$ confusion matrix, $\bar{C}$.
	One method used to protect against measurement errors of this form is to measure each stabilizer generator multiple times and correct according to the outcome that occurs most frequently.
	We prove that this assumption is valid for stabilizer codes with ideal syndrome extraction circuits and noisy measurement of the readout qubits in \Cref{sec:stabSym}.
	To be consistent with the effective logical map defined for the case where measurement is ideal, we fold the recovery operation into the measurement:
	
	\begin{align}
	\tilde{\mc{M}}=\mc{R}\ct\tilde{\Pi}_R.
	\end{align}
	Note that because ideal measurements are a special case of noisy measurements where $\bar{C}_{j,m}=\delta_{j,m}$, any results derived using noisy measurements of the form $\tilde{\mc{M}}$ throughout this paper hold equally for ideal measurements.
	We denote the effective logical channel with noisy measurements the same as that for ideal measurements but with the addition of a $\: \tilde{}\: $ over $\dot{\mc{N}}$ or $\bar{\mc{N}}$. We can then write
	
	\begin{align}
	\tilde{\dot{\mc{N}}}_U(R)&=\tilde{\bar{\mc{N}}}_U(\tilde{\mc{M}})\\
	&=\mc{U}^\dagger\tilde{\mc{M}}\mc{N}\mc{U}\\
	&=\mc{U}^\dagger\mc{R}^\dagger\tilde{\Pi}_R\mc{N}\mc{U}\\
	&=\sum_{Q\in\bb{R}}\bar{C}_{Q,R} \mc{U}^\dagger\mc{R}^\dagger\Pi_Q\mc{N}\mc{U}\label{eq:confmatsum}\\
	&=\bar{C}_{R,R}\dot{\mc{N}}_U(R).
	\label{eq:noisymeasptm}
	\end{align}
	The simplification from \Cref{eq:confmatsum} to \Cref{eq:noisymeasptm} is possible because we have the decoding map, $\mc{U}\ct$, after the measurement so any terms not in the codespace after the measurement are removed by $\mc{U}\ct$.
	In the case where the state is not decoded after the error correction step, the cospaces will remain populated from cases where $\delta_{m,j}=0$.
	 Examining \Cref{eq:noisymeasptm}, we see that the effective logical channel with noisy measurements is identical to the effective logical channel with ideal measurement, up to a renormalization.
	For the remainder of this paper, when we refer to noisy measurements we mean measurements with noise of the form described in this section.
	
	
	\section{Simulating quantum error correcting codes}\label{sec:equivalentConditionalMaps}
	
	The above error correction procedure is defined to be a pair $(U, \mf{M})$, where $\mf{M}$ is the measurement including recovery operators and we often label measurement outcomes by the associated recovery map.
	To simulate the error correction procedure, we need to compute the conditional maps $\bar{\mc{N}}_U(\mc{M})$.
	There are many maps and each conditional map is typically expensive to compute.
	However, as observed in \cite{Chamberland2017, Huang2019}, many of the conditional maps are related, which can be exploited to reduce the computation time.
	We define two superoperators $\mc{L,M} \in \bb{B(B(H))}$ to be degenerate if $\mc{L} = c\mc{M}$ for some $c\in\bb{R}$ and to be logically degenerate if there exist invertible superoperators $\bar{\mc{A}}, \bar{\mc{B}} \in \bb{B(B(H))}$ such that
	\begin{align}
	\mc{L} = c\mc{\bar{A} M \bar{B}}
	\end{align}
	for some $c\in\bb{R}$.
	We are primarily interested in the superoperators associated to measurement outcomes, and so we say that two outcomes $\mc{Q}, \mc{R} \in \mf{M}$ are (logically) degenerate for a fixed noise process $\mc{N}$ if the corresponding superoperators $\bar{\mc{N}}_U(\mc{Q})$ and $\bar{\mc{N}}_U(\mc{R})$ are (logically) degenerate.
	We say that two measurement outcomes are nondegenerate for a fixed noise process $\mc{N}$ if they are not known to be degenerate under $\mc{N}$.
	We make this particular distinction because we may not know all degeneracies for a class of noise processes.
	We call a set of degenerate measurement outcomes a degeneracy class and a set of logically degenerate measurement outcomes a logical degeneracy class.
	
	Some degeneracy relations are easily established for any noise process using stabilizers and logical operators.
	Stabilizers and logical operators of an encoding $U$ are invertible superoperators $\mc{S}, \mc{L} \in \bb{B}(\bb{B}(\bb{K}))$ such that $\mc{S} \mc{U} = \mc{U}$ and $\mc{L} \mc{U} = \mc{U}\bar{\mc{L}}$ for some $\bar{\mc{L}} \in \bb{B}(\bb{B}(\bb{H}))$, respectively.
	A stabilizer is a special case of a logical operator where $\bar{\mc{L}} = 1_{\bb{B}(\bb{H})}$.
	For example, the superoperators corresponding to $ZZI$ and $ZZZ$ are a stabilizer and a nontrivial logical operator of $U = \ketbra{000}{0} + \ketbra{111}{1}$, respectively.
	The stabilizer and logical groups are the groups $\mf{S}(U)$ of stabilizer and $\mf{L}(U)$ of logical operations, respectively.
	For any stabilizer $\mc{S} \in \mf{S}(U)$ and logical operator $\mc{L} \in \mf{L}(U)$, we have $\bar{\mc{N}}_U(\mc{RS}) = \bar{\mc{N}}_U(\mc{R})$ and $\bar{\mc{N}}_U(\mc{LR}) = \bar{\mc{L}} \bar{\mc{N}}_U(\mc{R})$.
	Therefore recovery maps in the same left coset of $\mf{S}(U)$ will be degenerate and recovery maps in the same left coset of $\mf{L}(U)$ will be logically degenerate.
	The above observation can be used to change a single measurement outcome $\mc{M}$ to make $\bar{\mc{N}}_U(\mc{M})$ closer to a given logical operation (in particular, the identity operation).
	However, two distinct elements $Q, R \in \mf{R}$ cannot be related by a logical operation, as otherwise they would violate the assumption that $U\ct Q\ct R U = 0$, that is, that the associated measurement operators are orthogonal.
	Therefore the relationship $\bar{\mc{N}}_U(\mc{LR}) = \bar{\mc{L}} \bar{\mc{N}}_U(\mc{R})$ cannot speed up the computation of the full set of conditional maps for a fixed measurement.
	Before proceeding any further, we prove that no distinct outcomes are logically degenerate for all noise processes under ideal or noisy measurements, and hence any degeneracies can only hold for restricted noise models.
	
	\begin{thm}\label{thm:noGlobal}
		For any encoding $U \in \bb{U(H)}$ and any two recovery maps $Q, R \in \mf{R}$ associated to different (possibly noisy) measurement outcomes, there exist noise models $\mc{N}$ such that $\mc{Q}$ and $\mc{R}$ are not degenerate under $\mc{N}$.
	\end{thm}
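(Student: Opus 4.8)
The plan is to reduce the claim to the case of ideal measurements and then exhibit, for the given pair $Q \neq R$, a single coherent (unitary) noise process whose two conditional maps are visibly non-proportional. First I would dispose of the noisy-measurement case: by \Cref{eq:noisymeasptm} the noisy conditional map $\tilde{\dot{\mc N}}_U(R) = \bar C_{R,R}\dot{\mc N}_U(R)$ differs from its ideal counterpart only by the scalar $\bar C_{R,R}$, and likewise for $Q$. Since degeneracy is defined only up to a real scalar, as long as the diagonal confusion elements $\bar C_{R,R}$ and $\bar C_{Q,Q}$ are nonzero (the physically relevant case---otherwise the corresponding conditional map vanishes identically and degeneracy is vacuous), non-degeneracy of the ideal maps $\dot{\mc N}_U(R)=\mc U\ct\mc R\ct\mc N\mc U$ and $\dot{\mc N}_U(Q)=\mc U\ct\mc Q\ct\mc N\mc U$ immediately transfers to the noisy ones. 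So it suffices to construct an $\mc N$ breaking degeneracy in the ideal case.

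For the construction I would introduce the cospace isometries $P_S := SU \in \bb U(\bb H,\bb K)$, which satisfy $P_S\ct P_{S'} = U\ct S\ct S'U = \delta_{S,S'}1_\bb H$ by orthogonality of the measurement operators, so that $\dot{\mc N}_U(S)(M) = P_S\ct\,\mc N(UMU\ct)\,P_S$. Fixing any non-scalar unitary $G \in \bb U(\bb H)$---which exists because a nontrivial code has $\dim\bb H \geq 2$; e.g.\ a logical Pauli---I would define a noise unitary $V$ on the codespace by $VU\psi = \tfrac{1}{\sqrt2}\bigl(RU\psi + QUG\psi\bigr)$. The orthogonality $P_R\ct P_Q = 0$ (valid since $Q\neq R$) makes the two terms orthogonal, and unitarity of $G$ then makes $\psi \mapsto VU\psi$ norm-preserving, so this is an isometry from the codespace into $\bb K$; as its image has the same dimension as the codespace, it extends to a genuine unitary $V$ on all of $\bb K$, and I take $\mc N = \mc V$.

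Finally I would read off the two conditional maps. Using $P_R\ct RU = 1_\bb H$, $P_Q\ct RU = 0$ and the analogous relations for $Q$, a direct computation gives $\dot{\mc N}_U(R) = \mc W_R$ with $W_R = P_R\ct VU = \tfrac{1}{\sqrt2}1_\bb H$ and $\dot{\mc N}_U(Q) = \mc W_Q$ with $W_Q = \tfrac{1}{\sqrt2}G$; that is, $\dot{\mc N}_U(R) = \tfrac12\,\mathrm{id}$, the identity superoperator on $\bb B(\bb H)$, while $\dot{\mc N}_U(Q) = \tfrac12\mc G$. These are proportional only if $\mc G$ is a scalar multiple of $\mathrm{id}$, i.e.\ $GMG\ct = \mu M$ for all $M$, which (taking $M=1_\bb H$ and using unitarity) forces $\mu=1$ and hence $G \propto 1_\bb H$, contradicting the choice of $G$. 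Thus $\mc Q$ and $\mc R$ are not degenerate under this $\mc N$. I expect the only genuine subtlety to be the bookkeeping that the codespace map is a well-defined isometry admitting a unitary extension: this is precisely where the orthogonality relation $U\ct Q\ct RU = \delta_{Q,R}1_\bb H$---the same relation that forbids distinct outcomes from being related by a logical operator---does the essential work. One can push the construction further by taking a non-unitary $\mc N$ (for instance making $\dot{\mc N}_U(R)$ invertible but $\dot{\mc N}_U(Q)$ a rank-deficient depolarizing map on the logical space), so that the two conditional maps have different superoperator ranks and hence are not even \emph{logically} degenerate, yielding the stronger conclusion advertised above.
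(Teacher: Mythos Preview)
Your proof is correct, but it takes a rather different route from the paper's. The paper simply sets $\mc{N}=\mc{Q}$ (conjugation by the recovery unitary $Q$): then $\tilde{\dot{\mc Q}}_U(Q)=\bar C_{Q,Q}\,\mc{U\ct Q\ct Q U}=\bar C_{Q,Q}\bar{\mc I}$ while $\tilde{\dot{\mc Q}}_U(R)=\bar C_{R,R}\,\mc{U\ct R\ct Q U}=0$ by orthogonality, and zero versus a nonzero identity can never be related even by invertible logical maps. That argument is a couple of lines and in fact delivers the stronger non-\emph{logical}-degeneracy for free. Your construction instead builds a unitary $V$ that coherently splits the codespace across the $Q$- and $R$-cospaces, producing $\tfrac12\,\mathrm{id}$ versus $\tfrac12\,\mc G$; the payoff is that both conditional maps occur with positive probability (neither outcome is impossible under $\mc N$), which is arguably a more ``honest'' noise model, at the cost of a more intricate verification and only yielding plain non-degeneracy from the main construction. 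Both approaches hinge on the same orthogonality relation $U\ct Q\ct R U=0$ and both implicitly assume the diagonal confusion entries are nonzero, so neither is more general in that respect.
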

	
	\begin{proof}
		Let $\mc{N} = \mc{Q}$, so
		\begin{align}
		\tilde{\dot{\mc{Q}}}_U(Q) = \bar{C}_{Q,Q}\dot{\mc{Q}}_U(Q)=\bar{C}_{Q,Q}\mc{U\ct Q\ct Q U} = \bar{C}_{Q,Q} \bar{\mc{I}}
		\end{align}
		Recalling that $U\ct R\ct Q U=\delta_{Q, R} I_\bb{H}$, we have
		\begin{align}
		\tilde{\dot{\mc{Q}}}_U(R) = \bar{C}_{R,R}\dot{\mc{Q}}_U(R) =\bar{C}_{R,R} \mc{U\ct R\ct Q U}=\bar{\bf{0}}.
		\end{align}
		Therefore, for any invertible superoperators $\mc{\bar{A}, \bar{B}} \in \bb{B(B(H))}$,
		\begin{align}
		\mc{\bar{A}\tilde{\dot{Q}}}_U(R) \mc{\bar{B}} = \bar{\bf{0}} \neq \tilde{\dot{\mc{Q}}}_U(Q)
		\end{align}
		as required.
	\end{proof}
	
	Despite the apparently strong statement of \Cref{thm:noGlobal}, it has been observed that different measurement outcomes can be degenerate for broad families of noise processes~\cite{Chamberland2017,Huang2019}.
	Reference~\cite{Huang2019} gave conditions based on code symmetries to identify degenerate syndrome maps for independent and identically distributed (IID) unitary noise in stabilizer codes, where a noise channel $\mc{N}$ is IID if it can be written as $\mc{N} = \mc{N}_1^{\otimes n}$ for some noise process, $\mc{N}_1$, acting on a single system.
	We present a more general result with a trivial and constructive proof, giving conditions under which different recovery operations or measurements are \textit{logically} degenerate for general QECCs.
	These conditions hold for more general (including non-IID) noise.
	Our results also show how logical operations can be factored into the error correction step by updating ideal recovery operations to other operations which are logically degenerate to their ideal counterparts.
	The following proposition follows directly from the definitions laid out above.
	This result seems trivial in light of our notation, however, arriving at such conclusions in less abstract settings (e.g., for stabilizer codes) is quite challenging.
	
	\begin{proposition}\label{thm:theMainThing}
		Let $U\in\bb{U(H, K)}$, $\mc{M}\in\mf{M}$ be a measurement outcome, and $\mc{N}\in\bb{B(B(K))}$. For any $\mc{A, B} \in\mf{L}(U)$, the maps $\mc{U\ct B}^{- 1}\mc{ \tilde{M} N A U}$ and $\mc{U\ct \tilde{M} N U}$ are logically degenerate.
	\end{proposition}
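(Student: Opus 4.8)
The plan is to prove logical degeneracy directly from the two defining relations of logical operators, converting the physical-space factors $\mc{A}$ and $\mc{B}^{-1}$ into logical-space factors $\bar{\mc{A}}$ and $\bar{\mc{B}}^{-1}$, so that the constant in the definition of logical degeneracy is simply $c=1$. Concretely, I would establish the operator identity
\begin{align}
\mc{U}\ct\mc{B}^{-1}\tilde{\mc{M}}\mc{N}\mc{A}\mc{U} = \bar{\mc{B}}^{-1}\left(\mc{U}\ct\tilde{\mc{M}}\mc{N}\mc{U}\right)\bar{\mc{A}},
\end{align}
and then read off $\bar{\mc{A}}' = \bar{\mc{B}}^{-1}$ and $\bar{\mc{B}}' = \bar{\mc{A}}$ as the invertible superoperators on $\bb{B}(\bb{H})$ witnessing logical degeneracy. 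Note that this does not require expanding $\tilde{\mc{M}}$ at all: the identity is purely a matter of commuting $\mc{A}$ and $\mc{B}^{-1}$ past the encoding and decoding maps.

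First I would dispatch the factor $\mc{A}$. Since $\mc{A}\in\mf{L}(U)$, the defining relation $\mc{A}\mc{U} = \mc{U}\bar{\mc{A}}$ lets me replace the trailing $\mc{A}\mc{U}$ by $\mc{U}\bar{\mc{A}}$, pushing an invertible logical factor $\bar{\mc{A}}$ to the far right. Next I would handle $\mc{B}^{-1}$. Because $\mf{L}(U)$ is a group, $\mc{B}^{-1}\in\mf{L}(U)$ as well, with induced logical action $\bar{\mc{B}}^{-1}$; this inverse exists and is itself the induced action because $\mc{U}\ct\mc{U} = 1_{\bb{B}(\bb{H})}$ forces $\bar{\mc{B}}$ to be invertible (sandwiching $\mc{B}\mc{U}=\mc{U}\bar{\mc{B}}$ and its inverse by $\mc{U}\ct$ gives $\bar{\mc{B}}\,\overline{\mc{B}^{-1}}=1_{\bb{B}(\bb{H})}$). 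The remaining task is to move $\mc{B}^{-1}$ leftward through the decoding map, i.e.\ to establish $\mc{U}\ct\mc{B}^{-1} = \bar{\mc{B}}^{-1}\mc{U}\ct$.

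Assembling these two moves produces the displayed identity, whose right-hand side is exactly of the form $c\,\bar{\mc{A}}'\left(\mc{U}\ct\tilde{\mc{M}}\mc{N}\mc{U}\right)\bar{\mc{B}}'$ with $c=1$ and both $\bar{\mc{A}}',\bar{\mc{B}}'$ invertible, which is the definition of logical degeneracy. I would also note that nothing here is special to noisy measurements: the same computation that yields \Cref{eq:noisymeasptm} shows that noisy readout contributes only the scalar renormalization $\bar{C}_{R,R}$, which can be folded into $c$, so the ideal and noisy cases are handled uniformly.

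The step I expect to be the crux is the commutation $\mc{U}\ct\mc{B}^{-1} = \bar{\mc{B}}^{-1}\mc{U}\ct$, since the defining relation $\mc{B}^{-1}\mc{U} = \mc{U}\bar{\mc{B}}^{-1}$ only controls $\mc{B}^{-1}$ on the codespace. To close this gap I would use that a logical operator maps the codespace into itself and, being invertible, commutes with the code projector $\Pi = \mc{U}\mc{U}\ct$; this is immediate for the unitary (Pauli/Clifford) logical operators used throughout the paper, where $\mc{B}^{-1}=\mc{B}\ct$. Given commutation with $\Pi$, I would split $\mc{B}^{-1} = \mc{B}^{-1}\Pi + \mc{B}^{-1}(1-\Pi)$: the first term gives $\mc{U}\ct\mc{B}^{-1}\Pi = \mc{U}\ct\mc{U}\bar{\mc{B}}^{-1}\mc{U}\ct = \bar{\mc{B}}^{-1}\mc{U}\ct$, while the second vanishes because $\mc{U}\ct(1-\Pi)=0$. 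This is the only place where more than a one-line substitution is needed, and it is precisely the structural fact that separates logical operators from arbitrary invertible maps.
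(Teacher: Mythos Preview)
Your approach is essentially the paper's: the paper simply declares the proposition to follow ``directly from the definitions,'' and the explicit computation it gives in the proof of \Cref{thm:symWithMeas} is exactly the identity $\mc{U}\ct\mc{B}^{-1}\tilde{\mc{M}}\mc{N}\mc{A}\mc{U} = \bar{\mc{B}}^{-1}(\mc{U}\ct\tilde{\mc{M}}\mc{N}\mc{U})\bar{\mc{A}}$ that you write down. You are in fact more careful than the paper about the step $\mc{U}\ct\mc{B}^{-1} = \bar{\mc{B}}^{-1}\mc{U}\ct$, which the paper uses without comment; your resolution via commutation with the code projector is correct for unitary logical operators (the case used throughout), though note that invertibility alone does not force commutation with $\Pi$---it is unitarity that guarantees the orthogonal complement of the codespace is also preserved.
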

	
	Note that the the symmetry operators $\mc{A, B}\in\mf{L}(U)$ are not required to be unitary, but in the case that $\mc{B}$ is unitary, we have $\mc{B}^{-1}=\mc{B}^\dag$.
	From \Cref{thm:theMainThing}, we can immediately identify some degenerate measurement outcomes when there are logical operators that commute with the noise.
	Specifically, we define the logical symmetry group of an encoding $U$ under a noise process $\mc{N}$ to be the group $\mf{L}(U, \mc{N})=\{\mc{G}\in\mf{L}(U): [\mc{N},\mc{G}]=0\}$, and the stabilizer symmetry group to be $\mf{S}(U, \mc{N}) = \mf{L}(U, \mc{N}) \cap \mf{S}(U)$.
	
	\begin{cor}\label{thm:symWithMeas}
		Let $U \in \bb{U(H, K)}$, $\tilde{\mc{M}} \in \tilde{\mf{M}}$, and $\mc{N} \in \bb{B(B(K))}$.
		The maps $\{\mc{\tilde{\bar{N}}}_U(\mc{B}^{-1} \mc{\tilde{M} A}) : \mc{A} \in \mf{L}(U, \mc{N}), \mc{B} \in \mf{L}(U)\}$ are logically degenerate.
		Furthermore, the maps $\{\mc{\tilde{\bar{N}}}_U(\mc{B}^{-1}\mc{\tilde{M} A}) : \mc{A} \in \mf{S}(U, \mc{N}), \mc{B} \in \mf{S}(U)\}$ are degenerate.
	\end{cor}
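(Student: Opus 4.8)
The plan is to reduce every member of the family to the single reference map $\tilde{\bar{\mc{N}}}_U(\tilde{\mc{M}})$, obtained by setting $\mc{A}=\mc{B}=\mc{I}$, and then to invoke \Cref{thm:theMainThing} together with the fact that (logical) degeneracy is an equivalence relation. First I would expand the conditional map and commute $\mc{A}$ past the noise:
\begin{align}
\tilde{\bar{\mc{N}}}_U(\mc{B}^{-1}\tilde{\mc{M}}\mc{A}) &= \mc{U}\ct\mc{B}^{-1}\tilde{\mc{M}}\mc{A}\mc{N}\mc{U} \\
&= \mc{U}\ct\mc{B}^{-1}\tilde{\mc{M}}\mc{N}\mc{A}\mc{U},
\end{align}
where the second equality uses the defining property $[\mc{N},\mc{A}]=0$ of $\mf{L}(U,\mc{N})$.

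Since $\mf{L}(U,\mc{N})\subseteq\mf{L}(U)$, the right-hand side is exactly the map appearing in \Cref{thm:theMainThing}, so it is logically degenerate to $\mc{U}\ct\tilde{\mc{M}}\mc{N}\mc{U}=\tilde{\bar{\mc{N}}}_U(\tilde{\mc{M}})$. Thus every member of the family is logically degenerate to this one fixed map. Because the witnessing superoperators are invertible and compose while nonzero real scalars are closed under products and inverses, logical degeneracy is transitive, so any two members of the family are mutually logically degenerate; this proves the first claim.

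For the second claim I would specialize to $\mc{A}\in\mf{S}(U,\mc{N})$ and $\mc{B}\in\mf{S}(U)$. Stabilizers are exactly the logical operators with trivial logical action, $\bar{\mc{A}}=\bar{\mc{B}}=1_{\bb{B}(\bb{H})}$, equivalently $\mc{A}\mc{U}=\mc{U}$ and $\mc{B}\mc{U}=\mc{U}$. Running the same reduction through \Cref{thm:theMainThing}, the invertible operators that witness the logical degeneracy are built from $\bar{\mc{A}}$ and $\bar{\mc{B}}$ and therefore collapse to the identity, leaving $\tilde{\bar{\mc{N}}}_U(\mc{B}^{-1}\tilde{\mc{M}}\mc{A})=c\,\tilde{\bar{\mc{N}}}_U(\tilde{\mc{M}})$ for some scalar $c$. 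Transitivity of ordinary degeneracy then upgrades the conclusion from logical degeneracy to degeneracy across the whole family.

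The step I expect to be the main obstacle is disposing of the factor $\mc{B}^{-1}$ sitting to the left of $\tilde{\mc{M}}$ when $\mc{B}$ is not unitary, since one cannot simply write $\mc{U}\ct\mc{B}^{-1}=\mc{U}\ct$ as in the unitary case where $\mc{B}^{-1}=\mc{B}\ct$. I would handle this using $\mc{U}\ct\mc{B}^{-1}\mc{U}=\bar{\mc{B}}^{-1}$ together with the observation that the decoding map $\mc{U}\ct$ only sees the code-space component that survives the measurement, so that $\mc{B}^{-1}$ may be replaced by its logical action $\bar{\mc{B}}^{-1}$; for stabilizers this action is the identity and the obstruction disappears, which is precisely why the logical statement strengthens to plain degeneracy.
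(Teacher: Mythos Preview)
Your proposal is correct and follows essentially the same route as the paper: expand the conditional map, commute $\mc{A}\in\mf{L}(U,\mc{N})$ past $\mc{N}$, and then pull $\mc{A}$ and $\mc{B}^{-1}$ through $\mc{U}$ and $\mc{U}\ct$ to obtain $\bar{\mc{B}}^{-1}\bar{\mc{N}}_U(\mc{M})\bar{\mc{A}}$ up to the scalar $\bar{C}_m$; the paper carries this out explicitly, while you package the last step as an appeal to \Cref{thm:theMainThing}. The obstacle you flag for non-unitary $\mc{B}$ is handled in the paper implicitly by exactly the mechanism you describe---the ideal component of $\tilde{\mc{M}}$ carries a built-in $\mc{U}\mc{U}\ct$ so that $\mc{U}\ct\mc{B}^{-1}\mc{M}=\mc{U}\ct\mc{B}^{-1}\mc{U}\,\mc{U}\ct\mc{M}=\bar{\mc{B}}^{-1}\mc{U}\ct\mc{M}$---so your resolution is on target.
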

	
	\begin{proof}
		Let $\mc{A} \in \mf{L}(U, \mc{N})$ and $\mc{B} \in \mf{L}(U)$.
		By assumption, there exist invertible $\mc{\bar{A}, \bar{B}} \in \bb{B(B(H))}$ such that $\mc{AU} = \mc{U \bar{A}}$ and $\mc{BU} = \mc{U \bar{B}}$.
		Therefore, by \Cref{eq:reducedMap,eq:noisymeasptm} we have
		\begin{align*}
		\tilde{\bar{\mc{N}}}(\mc{B}^{-1} \mc{\tilde{M} A})&=\bar{C}_m\bar{\mc{N}}_U(\mc{B}^{-1}\mc{ M A}) \\
		&= \bar{C}_m\mc{U\ct B}^{-1}\mc{M A N U}\\
		& = \bar{C}_m\mc{U\ct B}^{-1}\mc{ M N A U} \\
		&= \bar{C}_m\mc{\bar{B}}^{-1}\mc{ U\ct M N U \bar{A}}\\
		& = \bar{C}_m\bar{\mc{B}}^{-1} \bar{\mc{N}}_U(\mc{M}) \bar{\mc{A}},
		\end{align*}
		where $\bar{C}_m$ is the diagonal element of the logical confusion matrix associated with measurement outcome $\mc{M}$, as required.
		Note that the logical group and the logical symmetry group are groups, and so all pairs in the first set are logically degenerate.
		The final statement holds because $\mc{\bar{A} = \bar{I} = \bar{B}}$ if $\mc{A, B} \in \mf{S}(U)$ by definition.
	\end{proof}
	
	Measurement is often assumed to be noiseless in studies of quantum computing, and, in fact, this assumption was made in each of the papers which observed symmetries in QECCs before this paper \cite{Chamberland2017,Huang2019}.
	The symmetry conditions for QEC are given by \Cref{thm:perfmeas}, where the order and daggers differ because of the $\mc{R}\ct$ in \Cref{eq:idealReducedMap}.
	We also denote the subset of a set of superoperators $\mf{A} \subset \bb{B(B(H))}$ that have a single Kraus operator by $\mf{A}_K$ so we can use the bijection $A \leftrightarrow \mc{A}$ in the following statement.
	
	\begin{cor}\label{thm:perfmeas}
		Let $U \in \bb{U}( \bb{H}, \bb{K})$, $R\in\mf{R}$, and $\mc{N} \in \bb{B(B(K))}$.
		The maps $\{\mc{\tilde{\dot{N}}}_U(A\ct RB) : \mc{A} \in \mf{L}_K(U, \mc{N}), \mc{B} \in \mf{L}_K(U)\}$ are logically degenerate.
		Furthermore, the maps $\{\mc{\tilde{\dot{N}}}_U(A\ct RB) : \mc{A} \in \mf{S}_K(U, \mc{N}, \mc{B} \in \mf{S}_K(U)\}$ are degenerate.
	\end{cor}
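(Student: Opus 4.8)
The plan is to prove \Cref{thm:perfmeas} by a direct computation that mirrors the proof of \Cref{thm:symWithMeas}, but with the placement of daggers and the order of factors reversed to account for the fact that in \Cref{eq:idealReducedMap} the recovery enters as $\mc{R}\ct$ rather than as a left-multiplied measurement map. First I would fix $\mc{A}\in\mf{L}_K(U,\mc{N})$ and $\mc{B}\in\mf{L}_K(U)$. Because these lie in the single-Kraus subsets $\mf{L}_K$ and $\mf{S}_K$, the bijection $A\leftrightarrow\mc{A}$ lets me treat $A\ct R B$ as a genuine recovery operator and, using multiplicativity of $A\mapsto\mc{A}$, identify its superoperator as $\mc{A}\ct\mc{R}\mc{B}$. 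Substituting this recovery into \Cref{eq:idealReducedMap} and carrying the renormalization of \Cref{eq:noisymeasptm}, the conditional map becomes $\bar{C}\,\mc{U}\ct(\mc{A}\ct\mc{R}\mc{B})\ct\mc{N}\mc{U}=\bar{C}\,\mc{U}\ct\mc{B}\ct\mc{R}\ct\mc{A}\mc{N}\mc{U}$, where taking the adjoint reverses the order of the three factors and sends $\mc{A}\ct$ back to $\mc{A}$.

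The core of the argument is then the same three rewrites used for \Cref{thm:symWithMeas}, applied on the opposite sides. I would use $[\mc{N},\mc{A}]=0$ (valid since $\mc{A}\in\mf{L}(U,\mc{N})$) to move $\mc{A}$ past the noise, then the logical relation $\mc{A}\mc{U}=\mc{U}\bar{\mc{A}}$ to absorb it into the encoding on the right, and finally the daggered logical relation $\mc{U}\ct\mc{B}\ct=\bar{\mc{B}}\ct\mc{U}\ct$ (obtained from $\mc{B}\mc{U}=\mc{U}\bar{\mc{B}}$) to pull $\mc{B}\ct$ out on the left. This collapses the expression to $\bar{C}\,\bar{\mc{B}}\ct\,(\mc{U}\ct\mc{R}\ct\mc{N}\mc{U})\,\bar{\mc{A}}=\bar{C}\,\bar{\mc{B}}\ct\,\dot{\mc{N}}_U(R)\,\bar{\mc{A}}$. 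Since $\mc{A}$ and $\mc{B}$ are invertible, so are $\bar{\mc{A}}$ and $\bar{\mc{B}}\ct$, and $\bar{C}$ is real, so $\mc{\tilde{\dot{N}}}_U(A\ct R B)$ is logically degenerate to the common map $\mc{\tilde{\dot{N}}}_U(R)$ (recovered by taking $\mc{A}=\mc{B}=\bar{\mc{I}}$, which is itself a member of the set). As logical degeneracy is an equivalence relation, it follows that every element of the set is mutually logically degenerate.

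For the second statement I would specialize to $\mc{A}\in\mf{S}_K(U,\mc{N})$ and $\mc{B}\in\mf{S}_K(U)$, where by definition $\bar{\mc{A}}=\bar{\mc{B}}=\bar{\mc{I}}$; the conjugating factors vanish and the map reduces to a real multiple of $\dot{\mc{N}}_U(R)$, i.e.\ plain degeneracy. The step I expect to be the main obstacle is bookkeeping rather than mathematical depth: tracking how the dagger on $\mc{R}$ forces $A\ct$ and $B$ to swap sides and conjugate relative to the pattern $\mc{B}^{-1}\mc{\tilde{M}}\mc{A}$ of \Cref{thm:symWithMeas}, and checking that the single-Kraus restriction is exactly what licenses promoting the superoperator identities to the operator-level expression $A\ct R B$ that appears in the statement.
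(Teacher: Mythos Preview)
Your proposal is correct and is exactly the argument the paper has in mind: the paper does not give an explicit proof of \Cref{thm:perfmeas} but simply remarks that it follows from the proof of \Cref{thm:symWithMeas} with ``the order and daggers'' adjusted to account for the $\mc{R}\ct$ in \Cref{eq:idealReducedMap}, which is precisely the computation you carry out. Your identification of the single-Kraus restriction as what licenses passing between $A\ct R B$ and $\mc{A}\ct\mc{R}\mc{B}$, and your handling of the renormalization constant from \Cref{eq:noisymeasptm}, are both on point.
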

	
	A simple application of \Cref{thm:perfmeas} is to IID noise, that is, to superoperators $\mc{N}$ that can be written as $\mc{N} = \mc{N}_1^{\otimes n}$ for some noise process $\mc{N}_1 \in \bb{B(B(J))}$ acting on a single system with Hilbert space $\bb{J}$.
	For such noise, $\mf{L}(U, \mc{N}_1^{\otimes n})$ will typically contain permutation operators.
	Moreover, the recovery maps $R$ are typically chosen to be tensor products of elements of some ``nice error basis''\cite{Knill1996} $\mf{E} \subset \bb{U(J)}$ that contains the identity and can be used to construct the ideal PVM, which, for qubits, is taken to be the set of single-qubit Pauli matrices.
	We define the weight of some recovery map $R \in \mf{E}^{\otimes n}$ to be the number of subsystems on which it acts nontrivially.
	Recall that a group of permutations of $n$ objects is $k$-transitive if every ordered subset of $k$ objects can be mapped to every other ordered subset of $k$ objects.
	When a group is one-transitive, we say that it is transitive.
	We then have the following.
	
	\begin{cor}\label{thm:transitive}
		Let $n$ be a positive integer, $U \in \bb{U}(\bb{H}, \bb{K}^{\otimes n})$ be an encoding, $\mf{E}\subset\bb{U(K)}$ be a nice error basis, and $\mc{N} \in \bb{B(B(K}^{\otimes n}))$.
		Then, if ($\mf{L}(U, \mc{N})$) $\mf{S}(U, \mc{N})$ contains a $k$-transitive group, the set of weight-$k$ errors under ideal or noisy measurements will be partitioned into at most $\binom{|\mf{E}| + k - 2}{k}$ (logical) degeneracy classes.
	\end{cor}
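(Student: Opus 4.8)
The plan is to reduce the partition of weight-$k$ errors to an orbit count under the action of the assumed $k$-transitive group by conjugation. First I would invoke \Cref{thm:perfmeas} with $\mc{A} = \mc{B} = \mc{P}$ for each permutation operator $\mc{P}$ in the $k$-transitive subgroup $G \subseteq \mf{S}(U, \mc{N})$ (respectively $G \subseteq \mf{L}(U, \mc{N})$). This choice is legitimate because $\mf{S}(U, \mc{N}) \subseteq \mf{S}(U)$ and permutation operators are unitary, hence have a single Kraus operator and lie in $\mf{S}_K$ (respectively $\mf{L}_K$), placing $\mc{P}$ simultaneously in both required sets. Consequently the recovery maps $R$ and $P\ct R P$ yield degenerate (respectively logically degenerate) conditional maps. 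Since conjugation by a permutation operator merely permutes the tensor factors of $R \in \mf{E}\tn{n}$, every $G$-orbit of weight-$k$ errors lies within a single (logical) degeneracy class, so it suffices to bound the number of such orbits.

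Next I would characterize these orbits explicitly. A weight-$k$ error $R \in \mf{E}\tn{n}$ is specified by a support of $k$ positions together with an assignment of a nontrivial element of $\mf{E}\setminus\{1_\bb{K}\}$ to each support position. Conjugation permutes the tensor factors, so it preserves the multiset of nontrivial entries while relocating the support; the multiset is therefore an orbit invariant. The key claim is the converse: any two weight-$k$ errors with the same multiset of nontrivial entries lie in one $G$-orbit. Given errors $R$ with support $(p_1, \dots, p_k)$ and entries $e_l$, and $R'$ with support $(q_1, \dots, q_k)$ and entries $e'_l$, matching the equal multisets gives a bijection $\phi$ with $e_l = e'_{\phi(l)}$; then $k$-transitivity supplies a permutation in $G$ sending the ordered tuple $(p_1, \dots, p_k)$ to $(q_{\phi(1)}, \dots, q_{\phi(k)})$, and the associated conjugation carries $R$ to $R'$. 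Hence the $G$-orbits are in bijection with the size-$k$ multisets drawn from the $|\mf{E}| - 1$ nontrivial basis elements.

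Finally I would count. By stars and bars, the number of size-$k$ multisets from $|\mf{E}| - 1$ elements is $\binom{(|\mf{E}| - 1) + (k-1)}{k} = \binom{|\mf{E}| + k - 2}{k}$. Because each orbit is contained in a single (logical) degeneracy class, the weight-$k$ errors are partitioned into at most this many classes, giving the claimed bound (with ``at most'' accounting for the possibility that distinct orbits merge into one class under additional symmetries or features of $\mc{N}$).

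The main obstacle is the converse direction of the second step: showing that $k$-\emph{transitivity}, rather than merely the ability to map one $k$-element support to another as an unordered set, is exactly what is needed to collapse the classification to the multiset invariant. The delicate point is that one must reorder the nontrivial entries arbitrarily while simultaneously relocating the support, which requires mapping ordered $k$-tuples of positions to ordered $k$-tuples — precisely the defining property of a $k$-transitive group. Everything else is bookkeeping, once the reduction to $G$-orbits via \Cref{thm:perfmeas} is in place.
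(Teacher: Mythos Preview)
Your proposal is correct and follows essentially the same route as the paper: invoke \Cref{thm:perfmeas} with $\mc{A}=\mc{B}=\mc{P}$ for permutations in the assumed $k$-transitive subgroup to conclude that $R$ and $P^\dagger R P$ are (logically) degenerate, then count the resulting orbits by the stars-and-bars formula $\binom{|\mf{E}|+k-2}{k}$. The paper's proof is terser and simply asserts that the count of ``distinct unordered combinations of $k$ of the $|\mf{E}|-1$ nontrivial errors'' gives the bound; your explicit argument that the multiset of nontrivial entries is a complete orbit invariant under a $k$-transitive group (and in particular your careful justification of the converse direction via mapping ordered $k$-tuples) fills in a step the paper leaves implicit.
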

	
	\begin{proof}
		From \Cref{thm:perfmeas}, for any permutation $\mc{P} \in (\mf{L}(U, \mc{N})) \: \:  \mf{S}(U, \mc{N})$ of the tensor factors of $\bb{K}^{\otimes n}$ and any $R \in \mf{E}^{\otimes n}$, the conditional maps $\mc{\dot{N}}_U(P^\dagger RP)$ and $\mc{\tilde{N}}_U(R)$ are (logically) degenerate.
		The number of distinct unordered combinations of length $k$ from $s$ items is given by $\binom{s + k - 1}{k}$. Then \Cref{thm:transitive} follows directly from \Cref{thm:perfmeas} and the fact that there are $\binom{|\mf{E}| + k - 2}{k}$ distinct unordered combinations of $k$ of the $|\mf{E}| - 1$ nontrivial errors.
	\end{proof}
	
	For example, let $\mf{E} = \bb{P} = \{I, X, Y, Z\}$ be the set of single-qubit Pauli operators.
	Then, if $\mf{L}(U, \mc{N})$ contains a one-transitive group, the weight one Pauli operators for IID noise are partitioned into at most three degeneracy classes.
	If $\mf{L}(U, \mc{N})$ contains a two-transitive group, then there will be at most $\binom{4}{2}=6$ degeneracy classes of weight-two Pauli errors.
	
	\section{Symmetries of stabilizer codes}\label{sec:stabSym}
	
	\begin{table*}
		\begin{tabular}{c | c | c | c }
			& Three-qubit code & Five-qubit code & Steane code \\
			\hline
			Generators & $ZZI$ & $XZZXI$ & $ZZZZIII$ \\
			& $IZZ$ & $IXZZX$ & $ZZIIZZI$ \\
			& & $XIXZZ$ & $ZIZIZIZ$ \\
			& & $ZXIXZ$ & $XXXXIII$ \\
			& &  & $XXIIXXI$ \\
			& & & $XIXIXIX$ \\
			\hline
			$\bar{X}$ & $X^{\otimes 3}$ & $X^{\otimes 5}$ & $X^{\otimes 7}$ \\
			$\bar{Z}$ & $Z^{\otimes 3}$ & $Z^{\otimes 5}$ & $Z^{\otimes 7}$ \\
		\end{tabular}
		\caption{Stabilizer generators and logical $X$ and $Z$ operators for common stabilizer codes.}
		\label{table:stabcodegens}
	\end{table*}
	
	As an application of \Cref{sec:equivalentConditionalMaps}, we now consider how symmetries can be used to accelerate simulations of stabilizer codes under IID noise.
	We focus on symmetries $\mc{A}$ and $\mc{B}$ that correspond to conjugation by unitary operators $A, B \in \bb{U(H)}$ with $A = B$ and typically set $\bar{A} = \bar{I}$.
	As we are considering IID noise, any permutation of the qubits in $\mf{L}(U)$ is an element of the symmetry group of $U$ under $\mc{N}$.
	We can then use logical operations and elements of the symmetry group to find sets of degenerate recovery maps.
	
	We begin by reviewing stabilizer codes.
	An $(n, k)$-stabilizer code is defined by a set $\{G_0, \ldots, G_{n-k-1}\}$ of $n-k$ distinct, independent, and commuting $n$-qubit Pauli operators, referred to as the stabilizer generators.
	By distinct, we mean that no pair of stabilizer generators differ by a phase.
	We can define a basis $\{\ket{\bar{z}}: z \in \mathbb{Z}_2^k\}$ of the $2^k$-dimensional subspace $\bb{H}$ stabilized by the stabilizer generators.
	The encoding is then
	\begin{align}
	U = \sum_{z \in \mathbb{Z}_2^k} \ketbra{\bar{z}}{z}.
	\end{align}
	The standard way to define such a basis is to choose $k$ mutually commuting independent Pauli operators $\bar{Z}_0, \ldots, \bar{Z}_{k-1}$ from outside of the stabilizer group that commute with each stabilizer generator to be the encoded $Z$ operators and set $\ket{\bar{z}} \in \bb{H}$ to be the simultaneous $+1$ eigenvector of $\bar{Z}_j^{z_j}$ for each $j = 0, \ldots, {k-1}$.
	That is, for each $z \in \mathbb{Z}_2^k$, $\ket{\bar{z}}$ is the unique $+1$ eigenvector of
	\begin{align}\label{eq:logicalBasis}
	\prod_{j \in \bb{Z}_k} \tfrac{1}{2}(I + \bar{Z}_j^{z_j}) \prod_{i = 0}^{n-k-1} \tfrac{1}{2}(I + G_i),
	\end{align}
	up to an overall phase.
	We can also define $k$ mutually commuting Pauli operators $\bar{X}_0, \ldots, \bar{X}_{k-1}$ that commute with each stabilizer generator and also satisfy $[\bar{X}_j, \bar{Z}_j] \neq 0$ and $[\bar{X}_j, \bar{Z}_l] = 0$ for all $l \neq j$ to be the encoded Pauli X operators.
	We can then define $\bar{Y}_j = i \bar{X}_j \bar{Z}_j$ and $\bar{\bb{P}} = \otimes_{j\in\bb{Z}_k} \{\bar{I}, \bar{X}_j, \bar{Y}_j, \bar{Z}_j\}$ in analogy with the physical Pauli operators.
	Any state in the code space can be written as
	\begin{align}\label{eq:logicalExpansion}
	\rho = \left(\sum_{\bar{P} \in \bar{\bb{P}}} \mu_P \bar{P}\right)  \prod_{i = 0}^{n-k-1} \tfrac{1}{2}(I + G_i),
	\end{align}
	where $\mu_P \in [-1,1]$.
	Stabilizer generators and $\bar{X}$ and $\bar{Z}$ operators for common $(n, 1)$-stabilizer codes are listed in \Cref{table:stabcodegens}, where we omit subscripts on the logical operators for $k=1$.
	
	We define the Pauli stabilizer group and the Pauli logical group of an encoding $U$ to be $\mf{S}_p(U) = \mf{S}(U) \cap \bb{P}^{\otimes n}$ and $\mf{L}_p(U) = \mf{L}(U) \cap \bb{P}^{\otimes n}$, respectively.
	Note that these groups are often simply referred to as \textit{the} stabilizer and logical groups respectively.
	For an $(n, k)$-stabilizer code, the Pauli stabilizer group is $\mf{S}_p(U) = \langle G_0, \ldots, G_{n-k-1}\rangle$ and the Pauli logical group is $\mf{L}_p(U) = \langle \mf{S}_p(U), \bar{Z}_0, \ldots, \bar{Z}_{k-1}, \bar{X}_0, \ldots, \bar{X}_{k-1} \rangle$.
	From \Cref{eq:logicalExpansion}, we see that $\mf{L}_p(U)$ forms a basis for the codespace of a stabilizer code.
	Then any permutation operator that permutes the elements of the Pauli stabilizer group and leaves the elements of $\bar{\bb{P}}$ invariant will be an element of the (general) stabilizer group $\mf{S}(U)$.
	Similarly, any permutation operator that permutes the elements of the Pauli stabilizer group and permutes the elements of the logical Pauli group $\mf{L}_p(U)$ will be an element of the (general) logical group $\mf{L}(U)$.
	
	Therefore, we can find permutation operators in the symmetry group of the corresponding code for IID noise and so partition the recovery operators into degeneracy classes using \Cref{thm:perfmeas} by considering only the action of permutations on the Pauli stabilizer and Pauli logical groups.
	In \Cref{table:stabcodegens} we list permutation operators that generate transitive groups for each code, and a two-transitive group for the Steane code.
	As $\bar{X}$ and $\bar{Z}$ are permutationally invariant for these codes, the permutation operators are in the stabilizer group $\mf{S}(U)$.
	
	\begin{table*}
		\begin{tabular}{c | c | c | c }
			& Three-qubit code & Five-qubit code & Steane code \\
			\hline
			Permutations & (0 1 2) & (0 1 2 3 4) & (3 4)(5 6) \\
			& & (0 4)(1 3) & (0 3 1)(2 4 5) \\
			\hline
			Transitivity & 1 & 1 & 2\\
			\hline
			$|\mf{R}|$ & 4 & 16 & 64 \\
			$|\mf{R}_{\rm IID}|$ & 2 & 4 & 5 \\
		\end{tabular}
		\caption{Permutations that leave the code space invariant for common stabilizer codes~\cite{Huang2019}.
			We also list the transitivity of the symmetry groups generated by these permutations, the number $|\mf{R}|$ of recovery maps, the number $|\mf{R}_{\rm IID}|$ of logically nondegenerate Pauli recovery maps for generic IID noise under the permutation group formed by the listed permutations.}
		\label{table:permutationgroups}
	\end{table*}
	
	\begin{table*}
		\begin{tabular}{c | c |c | c}
			Code & Symmetry operation & Logical operation & $|\mf{R}_{\rm dep}|$\\
			\hline
			Five-qubit & $\mc{Q}^{\otimes 5}$ & $\overline{Q}$ & 2\\
			\hline
			Steane & $\mc{Q}^{\otimes 7}$ &  $\overline{Q}$ & 3\\
			& $\mc{H}^{\otimes 7}$ &  $\overline{\mc{H}}$ &\\
		\end{tabular}
		\caption{An incomplete list of non-trivial operations which induce symmetries in some of the more popular quantum error correcting codes and the number $|\mf{R}_{\rm dep}|$ of logically nondegenerate Pauli recovery maps for local depolarizing noise under the listed symmetry operations and the permutation symmetries listed in \Cref{table:permutationgroups}. Note that these symmetries are valid under the conditions of \Cref{thm:symWithMeas}, and therefore require that the physical noise acting on the system commute with the symmetry operator. For example, we can have physical noise $\mc{N}=\mc{D}_p^{\otimes n}$, where $\mc{D}_p$ is a single qubit depolarizing channel with parameter $p$ because $[\mc{Q},\mc{D}_p]=[\mc{H},\mc{D}_p]=0$.}
		\label{table:logicalsymops}
	\end{table*}
	
	For $(n, k)$-stabilizer codes, it is common to consider only Pauli recovery maps.
	Pauli recovery maps for an $(n, k)$-stabilizer code with encoding $U$ can be written as $\{T L_T: T \in \bb{P}^{\otimes n} / \mf{L}_p(U)\}$, where any choice of Pauli operator $L_T \in \mf{L}_P(U)$ for each $T$ will define a valid set of recovery maps.
	The set $\bb{P}^{\otimes n} / \mf{L}_p(U)$ is sometimes referred to as the set of pure errors~\cite{Poulin2006}.
	For an $(n, k)$-stabilizer code, there are $2^{n - k}$ recovery maps, where typically $k \ll n$.
	Using the permutation operators, we can reduce the number of distinct conditional maps that need to be computed for IID noise using \Cref{thm:perfmeas}.
	Note that the choice of $L_T$ will not affect the number of logically degenerate recovery maps, however, it will change the number of degenerate recovery maps.
	This is because applying $L_T$ to a recovery map only alters the corresponding conditional map by a logical operation and keeps the same syndrome, so toggling $L_T$ will change the logical relation between elements of the same logical degeneracy class.
	
	For example, one could choose the recovery maps for the three-qubit code to be $\mf{R}_1=\{III, XII, IYI, IIY \}$, in which case there are three degeneracy classes under IID noise.
	We could also select $\mf{R}_2=\{III, XII, IXI, IIX \}$ which only has two degeneracy classes under IID noise.
	This apparent discrepancy occurs because $IYI$ and $IIY$ are logically degenerate to $IXI$ and $IIX$, respectively, which are degenerate to $XII$ under IID noise by \Cref{thm:perfmeas}.
	Both $\mf{R}_1$ and $\mf{R}_2$ have two logical degeneracy classes.
	A similar situation arises for the other codes, which explains the discrepancy between the seven nondegenerate recovery maps observed in Reference~\cite{Chamberland2017} and the five nondegenerate recovery maps proven for IID unitary noise in Reference~\cite{Huang2019}. Reference~\cite{Huang2019} speculated that the discrepancy was due to the restriction to noise models with a single Kraus operator.
	Instead, the discrepancy is due to the selection of recovery operations; in Reference \cite{Chamberland2017}, elements from five different equivalence classes were present in the set of recovery operations, whereas in Reference \cite{Huang2019} the set of recovery operations chosen contained elements of seven equivalence classes.
	
	If the noise commutes with additional logical operators, the number of logically nondegenerate recovery maps is further decreased.
	A common example is IID depolarizing noise $\mc{N} = \mc{D}_p^{\otimes n}$, where $\mc{D}_p(\rho) = p \rho + (1-p)I/2$ is the single-qubit depolarizing channel with parameter $p$.
	For any single-qubit unitary (or unital) channels $\mc{U}_0, \ldots, \mc{U}_{n-1}$, the composite channel $\otimes_j \mc{U}_j$ commutes with $\mc{D}_p^{\otimes n}$.
	In particular, let $Q = \sqrt{Z}\sqrt{X}$, which maps $X \to Y \to Z \to X$.
	For the five-qubit and Steane codes, we have $Q^{\otimes n} \in \mf{L}(U)$, where $\bar{Q}$ implements the same operation on the logical space, that is, it maps $\bar{X} \to \bar{Y} \to \bar{Z}$.
	The set $\mf{R}_{\rm IID}$ of representative elements for the degeneracy classes under IID noise can be chosen to be $\{I, X_0, Y_0, Z_0\}$ for the five-qubit code and $\{I, X_0, Y_0, Z_0, X_0 Z_1\}$ for the Steane code.
	We can map $Y_0$ and $Z_0$ to $X_0$ by applying powers of $Q^{\otimes n}$ and so the representative elements of the logical degeneracy classes under IID depolarizing noise can be chosen to be $\{I, X_0\}$ for the five-qubit code and $\{I, X_0, X_0 Z_1\}$ for the Steane code respectively.
	For the five-qubit code, $\dot{\mc{N}}_U(R) = \bar{\mc{D}}_p$ for any $R \in \{X_0, Y_0, Z_0\}$, so that the logical operations (that is, $\bar{Q}$ and $\bar{Q}^\dagger$) commute with $\dot{\mc{N}}_U(R)$ and so the representatives of the \textit{degeneracy} classes can also be chosen to be $\{I, X_0\}$.
	The same property does not hold for the Steane code.
	Some examples of symmetry operators for the three-, five-, and seven-qubit codes are given in \Cref{table:permutationgroups,table:logicalsymops}.
	Note that these operators can be used to generate equivalence classes whenever they commute with the physical noise map, that is, they do not necessarily require IID noise.
	We discuss symmetries in non-IID noise in \Cref{sec:noniid}.
	
	\subsection{Noisy readout measurements in stabilizer codes}\label{sec:noisymeas}\FloatBarrier
	
Syndrome measurements are used to project onto cospaces of stabilizer codes. A syndrome measurement is typically performed by coupling the encoded system to a readout system via a syndrome extraction circuit and then measuring the state of the readout system. The measurement outcome for a syndrome measurement is a bit string $m\in\bb{Z}_2^{n-k}$. A circuit diagram which implements such a measurement for a single syndrome bit $m_i\in\bb{Z}_2$ is given in \Cref{fig:synmeas}.
	
	\begin{center}
		\begin{figure}
			\includegraphics{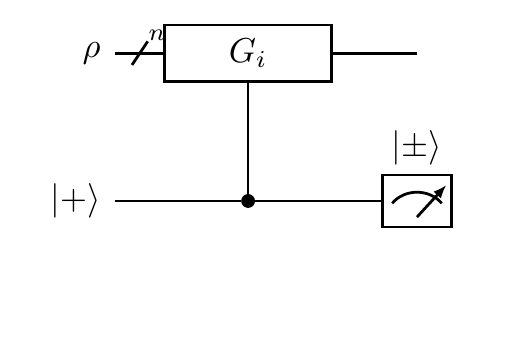}
			\vspace{-3em}
			\caption{Circuit diagram of a syndrome bit measurement corresponding to a generator $G_i$.}
			\label{fig:synmeas}
		\end{figure}
	\end{center}
	
	Let $\bb{K}$ and $\bb{T}$ be the Hilbert spaces of the encoded and readout systems respectively, $\mc{E} \in \bb{B}(\bb{K} \otimes \bb{T})$ be an ideal syndrome extraction circuit, and $\tilde{\mf{A}}$ be a noisy measurement acting on $\bb{K}\otimes\bb{T}$, where we have used $\: \tilde{}\: $ to denote noisy measurements or measurement operators.
	$\tilde{\mf{A}}$ can be implemented via a noisy X-basis measurement $\tilde{\mf{X}}:  \bb{T}\rightarrow \bb{R}$ by letting $\tilde{\mc{A}}=(\tilde{\mc{K}}\otimes \mc{I})\in\tilde{\mf{A}}, \forall \tilde{\mc{K}}\in\tilde{\mf{X}}$.
	Then the noisy measurement on the composite system is simply $\tilde{\mf{W}} = \{\tilde{\mc{A}}\mc{E} : \tilde{\mc{A}} \in \tilde{\mf{A}}\}$ and we can simplify the maps $\tilde{\mc{W}}\in\tilde{\mf{W}}$ to maps $\tilde{\Pi}_m\in\bb{B}(\bb{K})$ by using our knowledge of the initial state of the readout qubits.
	
	We can describe a noisy measurement by a confusion matrix (see \Cref{table:conf}).
	
	\begin{table}[h]
		\begin{tabular}{|c | c  |c|}
			\hline
			& $\ket{+}$ & $\ket{-}$\\
			\hline
			$\bra{+}$ & $a$& $1-a$\\
			\hline
			$\bra{-}$ & $1-b$ & $b$\\
			\hline
		\end{tabular}
		\caption{A confusion matrix for a noisy measurement in the X-basis where $a$ is the probability of correctly measuring $\ket{+}$ and $b$ is the probability of correctly measuring $\ket{-}$.}
		\label{table:conf}
	\end{table}
	Given a confusion matrix $C$, a noisy X-basis measurement $\tilde{\mf{X}}:\bb{T}\rightarrow\bb{R}$ can be expressed as a set of Kraus operators, $\{\tilde{X}_{i,j}\}$,
	\begin{align}
	\tilde{X}_{i,j}=\frac{1}{2}\sqrt{C_{i,j}}\bra{j},
	\end{align}
	where $i,j \in \{+,-\}$ with $i$ the observed outcome and $j$ the correct outcome. 
	We retrieve Kraus operators for the effective map on the encoded qubits by noting that the readout qubit starts in the state $\ket{+}\bra{+}$ so that we can write the Kraus operators $\{\tilde{P}_{m_i,j}\}$ of the noisy measurement $\tilde{\Pi}_{m_i,j}:\bb{K}\rightarrow\bb{K}$ with observed outcome $m_i$ and correct outcome $j$ as
	
	\begin{align}
	\tilde{P}_{G_i,m_i,j}=\frac{1}{2}\sqrt{C_{m_i,j}}\left(I\otimes \bra{j})cG_i(I\otimes\ket{+}\right).
	\end{align}
	Recalling that $cG_i=\ket{0}\bra{0}\otimes I+\ket{1}\bra{1}\otimes G_i$, we can simplify to
	
	\begin{align}
	\tilde{P}_{G_i,m_i,j}=\frac{1}{2}\sqrt{C_{m_i,j}}(I+(-1)^{\gamma(j)}G_i),
	\end{align}
	where $\gamma(+)= 0$ and $\gamma(-)= 1$.
	From here on, we use $\{0,1\}$ to label outcomes $\{+,-\}$ according to the mapping defined by $\gamma(\cdot)$.
	An ideal measurement is a special case of the above derivation, with $C_{i,j}=\delta_{i,j}$ so that the ideal measurement operators for a single syndrome bit are $P_{G_i,0}=(I+G_i)/2$ and $P_{G_i,1}=(I-G_i)/2$.
	Then we can express the Kraus operators for a noisy measurement in terms of the Kraus operators of an ideal measurement as follows:
	
	\begin{align}
	\tilde{P}_{m_i,j}=\frac{1}{2}\sqrt{C_{m_i,j}}P_{G_i,j}.
	\end{align}
	The noisy channel for the measurement of a single syndrome bit with observed outcome $m_i$ is the sum over a correct and incorrect measurement with outcome $m_i$,
	\begin{align}
	\tilde{\Pi}_{G_i,m_i}(\rho)&=\sum_{j\in\bb{Z}_2}\tilde{P}_{G_i,m_i,j}\rho\tilde{P}_{G_i,m_i,j}\\
	&=\sum_{j\in\bb{Z}_2}C_{m_i,j}P_{G_i,j}\rho P_{G_i,j}\\
	&=\sum_{j\in\bb{Z}_2}C_{m_i,j}\Pi_j(\rho).
	\end{align}
	A complete noisy syndrome measurement with outcome $m$ is implemented by $\tilde{\Pi}_m=\tilde{\Pi}_{m_{n-k}}\circ...\circ\tilde{\Pi}_{m_2}\circ\tilde{\Pi}_{m_{1}}$, so
	
	\begin{align}
	\tilde{\Pi}_m(\rho)&=\sum_{j\in\bb{Z}_2^{n-k}}\left(\prod_{i=1}^{n-k}\tilde{P}_{G_i,m_i,j_i}\right)\rho\left(\prod_{i=1}^{n-k}\tilde{P}_{G_i,m_i,j_i}\right)\\
	&=\sum_{j\in\bb{Z}_2^{n-k}}\left(\prod_{i=1}^{n-k}C_{m_i,j_i}^{(i)}P_{G_i,j_i}\right)\rho\left(\prod_{i=1}^{n-k}P_{G_i,j_i}\right),
	\label{eq:noisyprojunexpanded}
	\end{align}
	where $C^{(i)}$ is the confusion matrix for the $i^{th}$ readout qubit. Examining \Cref*{eq:noisyprojunexpanded}, we can see that the noisy projector onto a cospace associated with syndrome $m$ can be expressed as a probabilistic sum over ideal projections onto the cospaces of the code,
	
	\begin{align}
	\tilde{\Pi}_m(\rho)=\sum_{j\in\bb{Z}_2^{2^{n-k}}}\bar{C}_{j,m}\Pi_j(\rho),
	\end{align}
	where
	
	\begin{align}
	\bar{C}_{j,m}=\prod_{i=1}^{n-k} C_{m_i,j_i}^{(i)}
	\end{align}
	forms a confusion matrix for the logical measurement noise.
	
	\subsection{Symmetries of the toric code}\label{sec:toricsyms}\FloatBarrier
	
	\begin{figure}
		\includegraphics[width=\linewidth]{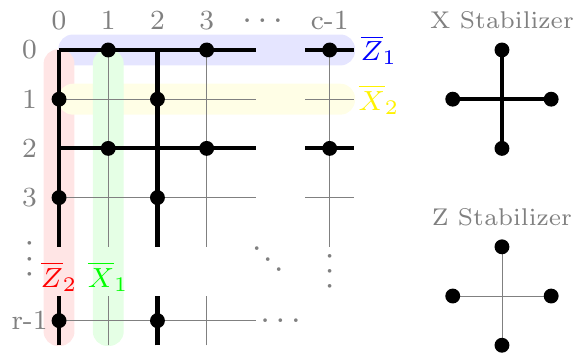}
		\caption{A toric code is specified by a number of rows, $r\in2\bb{Z}$, and a number of columns, $c\in2\bb{Z}$, with qubits lying on the edges of an $r\times c$ grid indexed by $\{(i,j): i\neq j, i\in\ \bb Z_r, j\in\bb Z_c\}$, as depicted above. The qubit in row $i$ and column $j$ is indexed by ($i$, $j$). The circles represent physical qubits, and the diagrams of X and Z stabilizers show which qubits these operators act on, with every qubit on the edge of an X (Z) stabilizer acted on by an X (Z) Pauli operator. The X stabilizers for the toric code are given by $\bb X=\{X(i,j): i\in 2\bb Z_{r/2}, j\in 2\bb Z_{c/2}\}$ and the Z stabilizers by $\bb Z=\{Z(i,j): i\in 2\bb Z_{r/2}+1, j\in 2\bb Z_{c/2}+1\}$, where $A(i,j)=A_{i-1,j}\otimes A_{i+1,j}\otimes A_{i,j-1}\otimes A_{i,j+1}$.
			The logical operators are $\bar{Z}_1=\bigotimes_{j\in 2\bb Z_{c/2}+1}Z_{0,j}$, $\bar{Z}_2=\bigotimes_{i\in 2\bb Z_{r/2}+1}Z_{i,0}$, $\bar{X}_1=\bigotimes_{j\in 2\bb Z_{r/2}}X_{i,1}$, and $\bar{X}_2=\bigotimes_{j\in 2\bb Z_{c/2}}X_{1,j}$.
			Note that all operations on the row (column) index are taken modulo $r$ ($c$), corresponding to periodic boundary conditions.
			The stabilizers are generated by $\bb X\cup\bb Z$, and a minimal generating group can be achieved by removing one X stabilizer and one Z stabilizer from $\bb X\cup\bb Z$.}
		\label{fig:toriccode}
	\end{figure}
	
	\begin{table*}
		\begin{tabular}{c | c | c | c}
			Code & Map name & Symmetry operation & Logical operation\\
			\hline
			$r\times c$ toric  & Twist & $(i,j)\rightarrow(i+2,j)\forall i,j$  & NA\\
			& Rotation & $(i,j)\rightarrow(i,j+2)\forall i,j$  & NA\\
			& Vertical reflection & $(i,j)\rightarrow(-i,j)\forall i,j$  & NA\\
			& Horizontal reflection & $(i,j)\rightarrow(i,-j)\forall i,j$  & NA\\
			& $\mc{J}$ & $\mc{H}^{\otimes n}\circ (i,j)\to(i+1,j+1)\forall i,j$ & $\overline{H}^{\otimes 2}\circ$ $\overline{\textrm{SWAP}}$\\
			$r\times r$ toric  & Diagonal reflection & $(i,j)\to(j,i)\forall i,j$ & $\overline{\textrm{SWAP}}$\\
		\end{tabular}
		\caption{Symmetry operations for the toric code, along with the associated logical operations. These operations can be applied for general IID noise, with the exception of $\mc{J}$, which requires noise that commutes with the $n$-qubit Hadamard gate as well as the permutation, for example, $\mc{J}$ can be applied to a toric code undergoing IID depolarizing noise.}
		\label{table:toric}
	\end{table*}
	
	We now show how our results can be applied to surface codes by considering the toric code as described in \Cref{fig:toriccode}.
	We do not fully specify an upper bound on the number of logically nondegenerate Pauli recovery maps because it depends on the number of rows and columns and involves high-weight recovery maps for large codes.
	Instead, we focus on weight-one and weight-two recovery maps.
	
	The torus is constructed from a rectangular lattice by identifying the top and bottom edges and then the left and right edges, which imposes periodic boundary conditions.
	Therefore, $\bar{Z}_1$ ($\bar{Z}_2$) can be moved to be any dark row (column) by multiplying the illustrated choice by $Z$ stabilizers along the rows (columns).
	Similarly, $\bar{X}_1$ ($\bar{X}_2$) can be moved to any light column (row) by multiplying the illustrated choice by $X$ stabilizers along the columns (rows).
	Therefore rotating or twisting the torus by two units to map dark lines to dark lines (horizontal or vertical translations with periodic boundary conditions) will permute the elements of $\mf{S}_p(U)$ and $\mf{L}_p(U)$, where the logical operation will simply be an identity, that is, twists that map $(i,j)\to (i+2,j)$ and rotations that map $(i,j)\to (i, j+2)$ are in the symmetry group of the $r\times c$ toric code.
	Using rotations and twists, we can map any weight-one Pauli to a weight-one Pauli that acts on one of the qubits at $(0, 1)$ or $(1, 0)$.
	Therefore, by \Cref{thm:perfmeas}, there will be at most 6 degeneracy classes of weight-one Pauli errors under IID noise instead of $3n$.
	
	We can also combine a rotation by 1, a twist by 1 (i.e. mapping all physical qubits by $(i,j)\to (i+1,j+1)$), and $H^{\otimes n}$, where $H$ is the Hadamard gate.
	This operation $\mc{J}$ preserves the stabilizer group by mapping $X$ stabilizers to $Z$ stabilizers and vice versa, $\bar{X}_1 \leftrightarrow \bar{Z}_2$, and $\bar{X}_2 \leftrightarrow \bar{Z}_1$, and so implements a logical SWAP combined with a Hadamard gate on each logical qubit.
	Therefore for noise that commutes with $\mc{J}$ (e.g., IID depolarizing noise), there are at most 4 logically nondegenerate weight-one Pauli recovery maps, namely, $\{X_{0,1}, Z_{0,1}, Y_{0, 1}, Y_{1, 0}\}$.
	
	From the periodic boundary conditions, we can also reflect vertically or horizontally by mapping $(i,j)\to (-i,j)$ or $(i,j)\to (i,-j)$, respectively.
	These reflections will permute the $X$ stabilizers and the $Z$ stabilizers and will either leave the logical operators invariant or map them to a different row or column, which is equivalent to the original logical operator up to a product of stabilizers.
	Therefore these reflections are elements of the (general) stabilizer group $\mf{S}(U)$.
	
	An $r\times r$ toric code can also be reflected across a diagonal axis via $(i,j)\to (j,i)$.
	This reflection permutes the stabilizers and maps $\bar{X}_1 \leftrightarrow \bar{X}_2$ and $\bar{Z}_1 \leftrightarrow \bar{Z}_2$ up to a product of stabilizers.
	Diagonal reflection is therefore an element of the logical symmetry group of an $r\times r$ toric code, which implements a logical SWAP gate.
	
	For any Pauli recovery map of a fixed weight-$w$, we can use rotations and twists to map one of the qubits $P$ acts nontrivially on to either $(0, 1)$ or $(1, 0)$.
	Therefore, the number of nondegenerate recovery maps is reduced by a factor of approximately $n / 2$, although the exact reduction factor introduced by translational symmetry depends upon $w$.
	We can further reduce the number of nondegenerate Pauli recovery maps using another symmetry of the toric code, namely, horizontal and vertical reflections about any row or column.
	Translational symmetries combined with horizontal and vertical reflection reduce the $9 \binom{n}{2}$ weight-two Pauli recovery maps to at most $9(\frac{n+c+r}{2}-2)$ nondegenerate recovery maps, which is approximately 9($n$-1)/2 up to edge effects.
	Using translation, we can map any weight-two Pauli recovery map to have weight on either $(0,1)$ or $(1,0)$.
	There are $n-1$ coordinate pairs containing each of these origins.
	Consider the coordinate pairs of the form $\{(0, 1), (i, j)\}$. Without loss of generality, we can use a vertical reflection---that is, $(i, j) \to (r-i, j)$---so that $i \in [0, r/2]$. We can then use a horizontal reflection and a rotation---that is, $(i, j)\to (i, 2 - j)$---so that $j \in [1, c / 2 + 1]$.
	There are then $\frac{\left(r/2+1\right)\times\left(c/2+1\right)}{2}-1$ coordinate pairs containing $(0,1)$ as we have reduced to a $(c/2+1)\times (r/2+1)$ grid with qubits on half of the locations, and we subtract the location of $(0,1)$.
	A similar reduction of coordinate pairs containing $(1,0)$ using horizontal reflection and vertical reflection with a twist of the form $(i,j)\to (2-i,j)$ produces pairs in $\{\{(1,0),(i,j)\}: i\in [1,r/2+1], j\in[0,c/2]\}$.
	Adding the sizes of these two sets of coordinate pairs, we get $\frac{rc/2+r+c}{2}-1$.
	There is one additional symmetry that occurs using these operations which is not covered by the above counting argument: There is a coordinate pair in each reduced set which, by undoing some of the operations used, maps to $\{(0,1),(1,0)\}$. As such, we can subtract one case.
	Then we multiply by 9 for the selection of an element in $\bb{P}_2$.
	Further reductions are possible using, e.g., a diagonal reflection for general IID noise in a square lattice or $\mc{J}$ for depolarizing noise.
	
	\subsection{Symmetries for non-IID noise}\label{sec:noniid}
	
	Though we have restricted attention in the examples thus far to IID noise in stabilizer codes, it should be noted that \Cref{thm:theMainThing}, \Cref{thm:symWithMeas}, and \cref{thm:perfmeas} apply equally to codes undergoing non-IID noise.
	In the case of non-IID noise, logically degenerate or strictly degenerate recovery operations can be calculated just as in previous sections.
	However, the form of the noise must be taken into account when finding (logically) degenerate maps, as commutation relations are not quite as simple for non-IID noise.
	If the non-IID noise commutes with a symmetry operator, it can be applied in the same way as in all of the examples thus far.
	For example, in a five-qubit code undergoing noise of the form $\mc{N}=\mc{N}_0\otimes\mc{N}_1\otimes\mc{N}_0\otimes\mc{N}_1\otimes\mc{N}_0$, the permutation operator $(0\: 4)(1\: 3)$ commutes with the noise and can thus be used to partition recovery operations, while $(0\: 1\: 2\: 3\: 4)$ does not preserve the noise and therefore cannot be used to generate symmetries.
	Then, using single-qubit $X$ operators as an example, $X_0\approxeq X_4 \neq X_1\approxeq X_3$, where we use $\approxeq$ to denote operators in the same degeneracy class, and $\neq$ to denote a separation between degeneracy classes.
	
	To find degeneracies when the physical noise does not commute with a symmetry operator, we must examine the effects of symmetries on the joint map $\mc{M}\mc{N}$, as in \Cref{thm:theMainThing}.
	For perfect measurements, this corresponds to finding degeneracies in maps of the form $\mc{R}\ct\mc{N}$, where $\mc{R}\ct$ is a recovery map.
	To find recovery maps which produce the same effective noise under a permutation operator, for example, we must simultaneously permute $\mc{N}$ and the recovery operation in question.
	This will not necessarily help in reducing computational complexity for a fixed noise model, but comes in handy when handling, for example, permutations of noise occurring for different error paths in a concatenated code.
	\Cref{sec:concatenated} gives a more complete example of how this can be applied, using a concatenated five-qubit code to illustrate, where the second level of concatenation sees non-IID noise even if the physical noise is IID.
	Note, however, that by the nature of concatenated codes, there are several permutations of noise at the second level of concatenation, allowing permutations to be applied more liberally to the noise model to find degeneracies between permutations of the noise as well as between recovery operations, whereas for a single noise model, any symmetry operator must preserve that noise model.
	
	\subsection{Symmetries of concatenated stabilizer codes}\label{sec:concatenated}
	
	\begin{figure*}
		\includegraphics{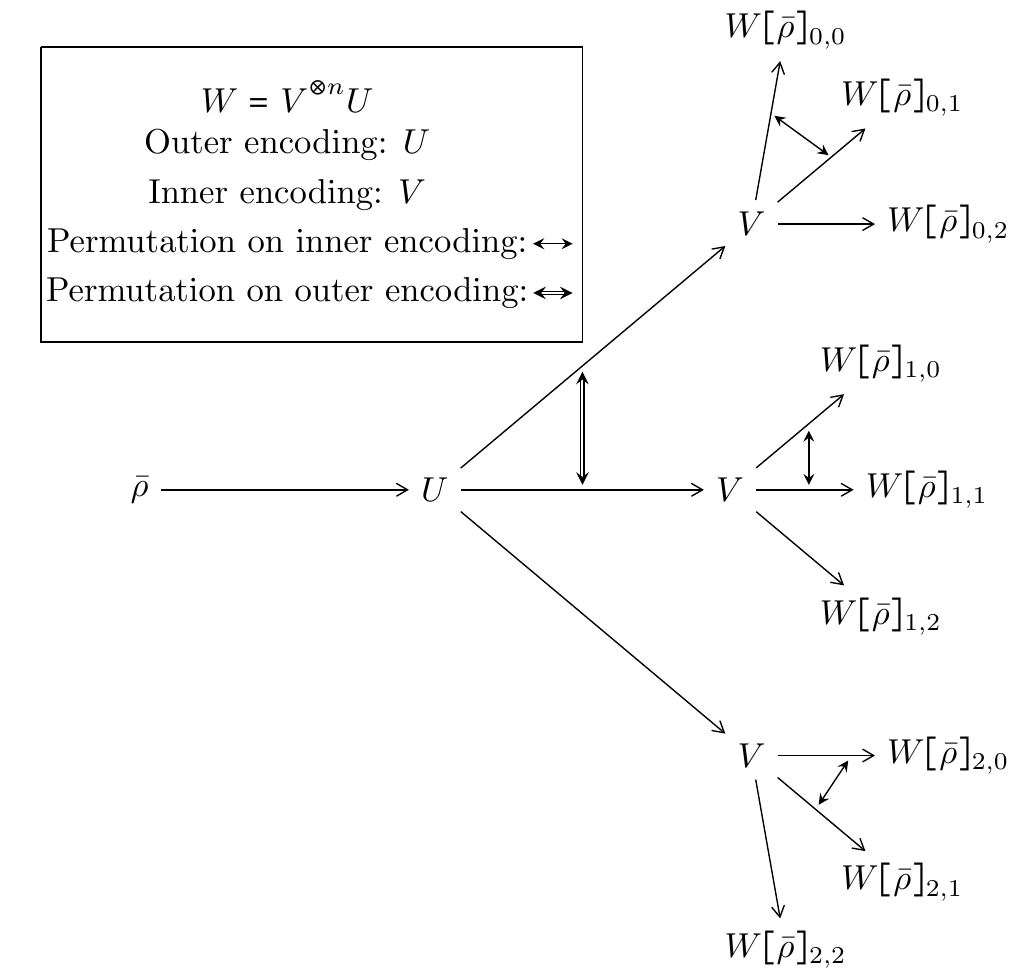}
		\caption{Tree structure of a concatenated code with outer encoding $U$ and inner encoding $V$. The leaves on the right represent physical qubits. The subscripts in $W[\bar{\rho}]_{i,j}$ denote that the physical qubit is the $j$th qubit in the inner encoding ($V$) of the $i$th qubit in the outer encoding ($U$). Double-ended arrows depict the action of a permutation which swaps the zeroth and first qubits on the inner and outer encodings.}
		\label{fig:concatenated_encoding}
	\end{figure*}
	
	A common method of improving the logical error rate is to concatenate QECCs.
	Let $U \in \bb{U}(\bb{H}, \bb{H}^{\otimes n})$ and $V \in \bb{U}(\bb{H}, \bb{H}^{\otimes m})$ be two encodings.
	Then $W = V^{\otimes n} U\in\bb{U}(\bb{H},\bb{H}^{\otimes mn})$ is a concatenated encoding with inner (outer) encoding $V$ ($U$), as illustrated in \Cref{fig:concatenated_encoding}.
	The number of recovery operators increases doubly exponentially in the number of levels of concatenation.
	However, we now show how \Cref{thm:perfmeas} and \Cref{thm:theMainThing} can be applied directly to concatenated codes to reduce the number of nondegenerate (or logically nondegenerate) recovery maps by concatenating symmetries of the inner and outer encodings.
	We apply \Cref{thm:perfmeas} at the first level as in previous examples, and in higher levels, use \Cref{thm:theMainThing} to apply symmetry operations.
	The reason \Cref{thm:theMainThing} is more applicable after the first encoding is that the effective physical noise for higher levels is no longer IID so the symmetry operations we've explored for IID noise no longer commute with the effective physical noises.
	We therefore must permute the noise and symmetry operations together to find degeneracy classes for different (noise map, recovery operation) pairs at each level.
	
	To analyze concatenated codes, we first need to construct a set of recovery maps and the logical group.
	Let $\mf{R}_U$ and $\mf{R}_V$ be sets of recovery maps for $U$ and $V$, respectively.
	Then $\mf{R}_W = \{(\otimes_j F_j) \mc{V}^{\otimes n}(G): F_0, \ldots, F_{n-1} \in \mf{R}_V, G \in \mf{R}_U \}$ is a set of recovery maps for $W$.
	To see this, let $R = (\otimes_j F_j) \mc{V}^{\otimes n}(G)$ and $\omega = (\otimes_j \xi_j) \mc{V}^{\otimes n}(\gamma)$ be two elements of $\mf{R}_W$ and recall that the recovery maps for, e.g., $U$ satisfy $U^\dagger Q^\dagger R U = \delta_{R, Q} 1_\bb{H}$.
	Then, as $\mf{R}_U$ and $\mf{R}_V$ are sets of recovery maps for $U$ and $V$, we have
	\begin{align}
	W^\dagger R^\dagger \omega W
	&= U^\dagger G^\dagger (\otimes_j V^\dagger F_j^\dagger \xi_j V) \gamma U \notag\\
	&= U^\dagger B^\dagger (\otimes_j \delta_{F_j, \xi_j} 1_\bb{H}) \gamma U \notag\\
	&= U^\dagger G^\dagger (1_\bb{H})^{\otimes n} \gamma U \prod_j \delta_{F_j, \xi_j} \notag\\
	&= 1_\bb{H} \delta_{G,\gamma},
	\end{align}
	as required.
	Any recovery maps for $U$ that are tensor products of logical operators for $V$ (e.g. Pauli recovery maps in concatenated stabilizer codes) can be commuted through $V^{\otimes n}$.
	
	We now show how to build some elements of the logical group $\mf{L}(W)$, focusing on tensor product operations and the symmetry group under IID noise.
	Suppose that $\mc{A}_0, \ldots, \mc{A}_{n-1} \in \mf{L}(V)$ and let $\mc{\bar{A}}_j$ be such that $\mc{A}_j \mc{V} = \mc{V}\mc{\bar{A}}_j$.
	Then, provided $(\otimes_j \mc{\bar{A}}_j) \in \mf{L}(U)$ with $(\otimes_j \mc{\bar{A}}_j) \mc{U} = \mc{U \hat{A}}$, we have
	\begin{align}
	(\otimes_j \mc{A}_j) \mc{W} &= (\otimes_j \mc{A}_j \mc{V) U}=(\otimes_j \mc{V\bar{A}}_j) \mc{U }\\
	&=\mc{ V}^{\otimes n} \mc{U \hat{A} = W\hat{A}},
	\end{align}
	and so $(\otimes_j \mc{A}_j) \in \mf{L}(W)$.
	
	The permutation group of a concatenated code can be generated from the permutation groups of the inner and outer encodings by labeling each qubit as a pair of indices and permuting one index with the inner code's permutation group and the other index by the outer code's permutation group.
	For example, let $\mc{A} \in \mf{L}(U)$ be a permutation of $n$ qubits.
	Then we can commute $\mc{A}$ through $\mc{V}^{\otimes n}$ by permuting the tensor factors as illustrated in \Cref{fig:concatenated_encoding}.
	
	If the permutation groups of both the inner and outer code in a concatenated code are transitive, then the concatenated code will also be transitive.
	This becomes evident when envisioning the action of permutation operators on the branches in \Cref{fig:concatenated_encoding}; if $U$ is transitive then each major branch (those to the right of $U$ but left of $V$) can be mapped to any other major branch.
	If $V$ is transitive, each subbranch of the major branches (those to the right of each $V$) can be mapped to any other subbranch of the same branch.
	Then it follows that any physical qubit can be mapped to any other physical qubits when the codes being concatenated are transitive, that is, the concatenation of transitive codes will also be transitive, although the concatenation of two-transitive codes will typically not be two-transitive because the individual permutations cannot separate errors that act on the same inner code block.
	
	For example, consider the nine-qubit Shor code, which can be regarded as a concatenated code with $U = \ketbra{000}{0} + \ketbra{111}{1}$ and $V = UH = \ketbra{000}{+} + \ketbra{111}{-}$.
	The Hadamard gate is introduced so that weight-one $Z$ errors, which are logical operators in the inner repetition code without the Hadamard, are mapped to $X$ errors and so can be detected and corrected by the outer code.
	There are $2^8 = 256$ recovery maps for the nine-qubit Shor code.
	Any permutation of the three qubits for either $U$ or $V$ is in both $\mf{S}(U)$ and $\mf{S}(V)$.
	Labelling the qubits by $(i, j)$ where $i$ labels the qubits in the inner encoding and $j$ labels the qubits in the outer encoding, we can apply any permutation to $j$ for each $i$ independently (i.e., permutations on the inner encoding) and any permutation on $j$ for all $i$ (i.e., permutations on the outer encoding, which affect all corresponding qubits in the inner encoding).
	Because both the inner and outer encodings are transitive, the weight-one Pauli recovery maps for the concatenated code fall into three degeneracy classes under IID noise with representative elements $X_{0, 0}$, $Y_{0, 0}$, and $Z_{0, 0}$.
	The weight-two Pauli recovery maps can be divided into two types.
	For the first type, both nontrivial Pauli terms act on a qubit in the same inner encoding.
	By permuting the outer encoding, we can map the nontrivial Pauli terms to act on the first inner encoding.
	We can then permute the qubits in the inner encoding so that the nontrivial Pauli terms act on qubits $(0, 0)$ and $(0, 1)$.
	Multiplying by $Z_{0, 0} Z_{0, 1}$ (a stabilizer) and permuting qubits $(0, 0)$ and $(0, 1)$ as necessary, there are 3 nondegenerate weight-two recovery maps of this type, namely, $P_{0, 0} X_{0, 1}$ for $P \in \{X, Y, Z\}$, where some of the other weight-two recovery maps are in the same degeneracy class as $I$ or weight-one errors.
	For the second type, one nontrivial Pauli term acts on a qubit in each of two distinct inner encodings.
	By permuting the inner encodings, we can map any such Pauli recovery map to $P_{0, 0} Q_{1, 0}$ for $P,Q\in\bb{P}$.
	There are nine such terms; however, we can swap the inner encodings to reduce to five terms (i.e., $ZX$ and $XZ$ are degenerate).
	Therefore there are at most five nondegenerate weight-two recovery maps of this type.
	Then there are at most 12 nondegenerate Pauli recovery maps under IID noise (1 of weight-zero, 3 of weight-one, and 8 of weight-two), compared to the total of 256 Pauli recovery maps selected for a given implementation.
	
	The reduction in the number of recovery maps is more dramatic for larger concatenated codes and higher levels of concatenation.
	For the five-qubit and Steane codes, there are $16^5 \approx 10^6$ and $64^7 \approx 4\times 10^{12}$ recovery maps at the first level of concatenation respectively.
	To quickly remove many degenerate recovery maps, it is more convenient to use the recursive structure typically used in numerical studies.
	For noise with $\mc{N} = \mc{M}^{\otimes m}$ and a recovery map $R = (\otimes_j F_j) \mc{V}^{\otimes n}(G)$, where $F_0, \ldots, F_{n-1} \in \mf{R}_V$ and $G \in \mf{R}_U$, we then have
	
	\begin{align}\label{eq:concatenatedDerivation}
	\dot{\mc{N}}_W(R)
	&= \mc{W}\ct \mc{R}\ct \mc{M}^{\otimes m} \mc{W} \notag\\
	&= \mc{U}\ct \mc{G} \left(\otimes_j \mc{V}\ct \mc{F}_j\ct \mc{M} \mc{V}\right) \mc{U} \notag\\
	&= \mc{U}\ct \mc{G} \left(\otimes_j \bar{\mc{M}}_V(\mc{F}_j)\right) \mc{U}.
	\end{align}
	That is, the logical noise map via \Cref{eq:reducedMap} for the concatenated code is simply the logical map for the outer code where the effective ``physical'' noise map is the effective logical noise for the inner code conditioned on the $F_j$.
	We then see that the number of logical nondegeneracy classes at the first level of a concatenated code will be at most $|\mf{R}_{V,\mc{M}}|^n \times |\mf{R}_U|$, where $\mf{R}_{V,\mc{M}}$ is the set of degeneracy classes under the noise process $\mc{M}$.
	For the five-qubit and Steane codes with IID depolarizing noise, there are then at most $2^5 \times 16 = 512$ and $3^7 \times 64 = 34,992$ logical degeneracy classes.
	
	We can further reduce the number of logical degeneracy classes by using the symmetry of the outer encoding and choosing the decoder correctly.
	Importantly, even if the physical noise is IID, the effective noise for the outer encoding conditioned on observed syndromes (i.e., the $F_j$ in \Cref{eq:concatenatedDerivation}) will not be IID.
	Nevertheless, the set of effective noise processes for the outer encoding is permutationally invariant with respect to the outer code.
	Consider two recovery maps $G \bigotimes_j F_j$ and $G' \bigotimes_j F'_j$.
	If $\bigotimes_j F'_j$ can be obtained from $\bigotimes_j F_j$ by an element of the symmetry group, then an optimal decoder will set $G'$ to be the same permutation of $G$ as then the two recovery maps will have the same (and optimal) logical noise by \Cref{thm:perfmeas}.
	Therefore, instead of all the elements of $\mf{R}_\mc{M}^{\otimes n}$, we need only consider the representative elements under the symmetry group of $U$.
	
	For example, for the five-qubit code under IID depolarizing noise there are $2^5$ effective noise processes $\otimes_j \bar{\mc{M}}_V(F_j)$ corresponding to the choices of $(F_0, \ldots, F_4) \in \{I, X_0\}^5$ (where $I$ and $X_0$ act on the relevant code block).
	We denote the operator that acts as $X_0$ on the $i$th code block and identity on the other blocks by $\hat{X}_i$.
	With this notation, because of the (0 1 2 3 4) symmetry, we only need to consider eight effective noise processes for an optimal decoder, namely those conditioned on the following operators:
	\newtagform{eqt}{(\,}{),}
	\usetagform{eqt}
	\begin{align}
	\mf{F}_0 = \{I\} \tag{no errors}\\
	\mf{F}_1 = \{\hat{X}_2\} \tag{one error}\\
	\mf{F}_2 = \{\hat{X}_0 \hat{X}_4, \hat{X}_1 \hat{X}_3\} \tag{two errors}\\
	\mf{F}_3 = \{\hat{X}_1 \hat{X}_2 \hat{X}_3, \hat{X}_0 \hat{X}_2 \hat{X}_4\} \tag{three errors} \\
	\mf{F}_4 = \{\hat{X}_0 \hat{X}_1 \hat{X}_3 \hat{X}_4\} \tag{four errors} \\
	\mf{F}_5 = \{\hat{X}_0 \hat{X}_1 \hat{X}_2 \hat{X}_3 \hat{X}_4\} \tag{five errors}
	\end{align}
	\usetagform{default}
	where we have chosen as representative elements the combinations that are invariant under reflection about qubit 2 (that is, the permutation $(0\: 4)(1\: 3)$, which is in the symmetry group of the five-qubit code).
	Furthermore, for IID depolarizing noise at the physical level (i.e., $\mc{M} = \mc{D}_p^{\otimes 5}$), each of the $\bar{\mc{M}}(\mc{F}_j)$ is a depolarizing channel and so commutes with $\mc{Q}$, so we need only consider the six choices of $G \in \{I, X_0, \ldots, X_4\}$.
	Moreover, by the reflection symmetry of the five-qubit code and the chosen combinations about qubit 2, we need only consider $G \in \{I, X_0, X_1, X_2\}$.
	That is, interpreting $(F, G)$ as specifying the recovery map $R = F \mc{V}^{\otimes n}(G)$, there are at most 32 degeneracy classes whose representative elements are
	\begin{align}
	(\mf{F}_0 \cup \mf{F}_1 \cup \mf{F}_2 \cup \mf{F}_3 \cup \mf{F}_4 \cup \mf{F}_5) \times \{I, X_0, X_1, X_2\}.
	\end{align}
	For the no-error or five-error cases, we can permute $G$ by arbitrary cyclic permutations because the effective noise for the outer encoding is IID and the outer encoding is invariant under cyclic permutations, so we need only consider $G \in \{I, X_1\}$ and so the elements in the two sets,
	\begin{align}
	\mf{F}_0 \times \{X_0, X_1, X_2\} \notag\\
	\mf{F}_5 \times \{X_0, X_1, X_2\}
	\end{align}
	are all degenerate, leaving only 28 degeneracy classes.
	We note that there is a further degeneracy, as explicit calculations show that there are, in fact, only 20 degeneracy classes as any two choices of $G$ that are related by a permutation that leaves $\otimes_j F_j$ invariant are degenerate.
	Writing pairs of noise maps $\mc{N}$ and recovery operations $R$ as ($\mc{N}$,$R$), we denote degeneracy by $(F, G) \cong (\xi, \gamma)$.
	From explicit calculations, we know that
	\begin{align}
	(\hat{X}_2, X_0) &\cong (\hat{X}_2, X_1), \notag\\
	(\hat{X}_0 \hat{X}_4, I) &\cong (\hat{X}_1 \hat{X}_3, I), \notag\\
	(\hat{X}_0 \hat{X}_4, X_1) &\cong (\hat{X}_0 \hat{X}_4, X_2), \notag\\
	(\hat{X}_1 \hat{X}_3, X_0) &\cong (\hat{X}_1 \hat{X}_3, X_2), \notag\\
	(\hat{X}_1 \hat{X}_2 \hat{X}_3, I) &\cong (\hat{X}_0 \hat{X}_2 \hat{X}_4, I), \notag\\
	(\hat{X}_0 \hat{X}_2 \hat{X}_4, X_0) &\cong (\hat{X}_0 \hat{X}_2 \hat{X}_4, X_2), \notag\\
	(\hat{X}_1 \hat{X}_2 \hat{X}_3, X_1) &\cong (\hat{X}_1 \hat{X}_2 \hat{X}_3, X_2),\textrm{ and} \notag\\
	(\hat{X}_0 \hat{X}_1 \hat{X}_3 \hat{X}_4, X_0) &\cong (\hat{X}_0 \hat{X}_1 \hat{X}_3 \hat{X}_4, X_1) .
	\end{align}
	However, we have not been able to identify an explicit symmetry to show that these cases are degenerate.
	
	\section{Conclusion}
	
	In this paper, we introduced notation which facilitates the exploration of the effective noise arising from a QECC, including the effects of noisy measurements.
	In \Cref{sec:noisymeas} we demonstrated that noisy readout measurements leave the effective logical noise invariant up to a renormalization for stabilizer codes.
	We showed how the computational cost of calculating the effective logical noise in a QECC can be reduced by orders of magnitude by identifying measurement outcomes and recovery maps which result in the same (or logically degenerate) noise in \Cref{sec:equivalentConditionalMaps}.
	This reduction in computational complexity does not reduce the accuracy of the simulation.
	In \Cref{sec:stabSym}, we demonstrated the usefulness of the reduction by presenting degeneracies for the three-, five-, and seven-qubit codes (\Cref{table:permutationgroups,table:logicalsymops}) as well as concatenated codes (\Cref{sec:concatenated}) and the toric code (\Cref{table:toric}) for IID noise.
	Identifying additional symmetries in these and other QECCs is an open question, as is characterizing how symmetries arise when multiple steps of error correction occur before decoding.
	We anticipate that our results can be used to construct better soft decoders for concatenated codes, since logically degenerate recovery maps should simply be altered to make them degenerate.
	
	Furthermore, a significant barrier to the successful implementation of a quantum error correcting protocol in a large system is the time that the error correction step takes.
	The methods in this paper can be used to simplify the decoding step, since optimal recovery maps for a small number of nondegenerate syndromes can be pre-computed and cached.
	This allows other syndromes to be straightforwardly reduced to the nondegenerate syndromes and the recovery maps can be altered accordingly.
	While we focused on the toric code due to its translational symmetry, we expect that similar reductions will also hold for surface codes \cite{Fowler2012, Bravyi2017} and color codes \cite{Bombin2006,Bombin2015} even without translational symmetry, in part because these codes have more logical gates with simple (e.g., tensor product) structures.
	
	\section{Acknowledgements}
	
	The authors thank Andrew Doherty for helpful discussions.
	This research was undertaken thanks in part to funding from TQT, CIFAR, the Government of Ontario, and the Government of Canada through CFREF, NSERC, and Industry Canada.
	
	\bibliography{library}
	
\end{document}